\newtheorem{definition}{Definition}
\newtheorem{lemma}{Lemma}
\newtheorem{thm}{Theorem}
\newtheorem{cor}{Corollary}
\newcommand{\tr}{\mathsf{tr}}
\newcommand{\Tr}{\mathsf{Tr}}
\newcommand{\corrref}[1]{\hyperref[#1]{Corollary~\ref{#1}}}
\newcommand{\defref}[1]{\hyperref[#1]{Definition~\ref{#1}}}
\newcommand{\secref}[1]{\hyperref[#1]{Section~\ref{#1}}}
\newcommand{\chapref}[1]{\hyperref[#1]{Chapter~\ref{#1}}}
\newcommand{\appref}[1]{\hyperref[#1]{Section~\ref{#1}}}
\newcommand{\thmref}[1]{\hyperref[#1]{Theorem~\ref{#1}}}
\newcommand{\lemref}[1]{\hyperref[#1]{Lemma~\ref{#1}}}
\newcommand{\figref}[1]{\hyperref[#1]{Fig.~\ref{#1}}}
\renewcommand{\eqref}[1]{\hyperref[#1]{Eq. (\ref{#1})}}
\newcommand{\tableref}[1]{\hyperref[#1]{Table (\ref{#1})}}
\newcommand{\algref}[1]{\hyperref[#1]{Algorithm (\ref{#1})}}
\renewcommand{\epsilon}{\varepsilon}
\providecommand{\customgenericname}{}
\newcommand{\newcustomtheorem}[2]{%
  \newenvironment{#1}[1]
  {%
   \renewcommand\customgenericname{#2}%
   \renewcommand\theinnercustomgeneric{##1}%
   \innercustomgeneric
  }
  {\endinnercustomgeneric}
}
\begin{document}                                               

\begin{center}                                                 

{\bf                                                           
%
Differential Privacy Amplification in Quantum and Quantum-inspired Algorithms
\\ 
}                                                             
\vspace{2ex}                                                  
Armando Angrisani$^{1}$, 
Mina Doosti$^{2}$,
Elham Kashefi$^{1,2}$
\\
\vspace{2ex}                                                  
{\sl                                                          
%
$^{1}$LIP6, CNRS, Sorbonne Université, 75005 Paris, France\\
$^{2}$School of Informatics, University of Edinburgh, EH8 9AB Edinburgh, United Kingdom
}                                                             
{\small \sl                                                   
%

}                                                             
\end{center}                                                  
\vspace{5mm}                
\pagestyle{plain}
%
%
\begin{abstract}
Differential privacy provides a theoretical framework for processing a dataset about $n$ users, in a way that the output reveals a minimal information about any single user.
Such notion of privacy is usually ensured by noise-adding mechanisms and amplified by several processes, including subsampling, shuffling, iteration, mixing and diffusion. In this work, we provide privacy amplification bounds for quantum and quantum-inspired algorithms. In particular, we show for the first time, that algorithms running on quantum encoding of a classical dataset or the outcomes of quantum-inspired classical sampling, amplify differential privacy. Moreover, we prove that a quantum version of differential privacy is amplified by the composition of quantum channels, provided that they satisfy some mixing conditions.
\end{abstract}

\section{Introduction}\label{sec:introduction}

Differential Privacy (DP) \cite{dwork1, dwork2} is a rigorous mathematical framework for preserving the information of each individual in a dataset while enabling to analyse and process the dataset. Intuitively, a differentially private algorithm can learn a statistical property of a dataset consisting of $n$ users, yet it leaks \emph{almost} nothing about each individual user. Such mechanisms are of great interest and importance when dealing with sensitive data like hospital data, banks, social media, etc. 
Apart from privacy-preserving data analysis, differential privacy has also found several applications in other fields of computer science such as machine learning \cite{JMLR:v12:chaudhuri11a, abadi2016, papernot2017semisupervised, bassily2018}, statistical learning theory \cite{equiv, JMLR:v17:15-313, online, arunachalam2021private}, mechanism design \cite{design}.

Since its introduction, multiple analytical tools for the design of private data analyses have been developed \cite{boosting, boost2, mult, mult2}. Most commonly, these mechanisms exploit techniques like adding noise to the final output or randomizing the input. A loose analysis of complex mechanisms built out of these blocks can be conducted with simple tools, such as basic composition rules and robustness to post-processing. 
However, the inherent trade-off between privacy and utility in practical applications ignited the development of more refined rules leading to tighter privacy bounds. 
A trend in this direction is to show that several sources of randomness amplify the guarantees of standard DP mechanisms. In particular, DP amplification results have been shown for subsampling, iteration, mixing and shuffling \cite{subsampling, iteration, mixing, shuffling}.

Given the major influence of quantum computing and quantum information in the past decades over different areas of computer science, an interesting question is whether quantum and quantum-inspired algorithms can enhance differential privacy. This question becomes specifically more relevant with the availability of Noisy Intermediate Scale Quantum devices (NISQ) today \cite{preskill_quantum_2018}. The noisy nature of these devices (also previously exploited by \cite{liu2021}) on one hand, and the potential capabilities of quantum algorithms, on the other hand, make such quantum or hybrid quantum-classical mechanisms, an interesting subject of study from the point of view of differential privacy. Furthermore, the connection between machine learning and differential privacy suggests that answering this question can lead to intriguing insights into the capabilities of quantum machine learning. 

Differential privacy has been extended to quantum computation in \cite{quantumDP} and \cite{aaronson2019gentle}. One of the main challenges in translating the definition of DP in the quantum setting is to characterise the notion of \emph{neighbouring quantum states}. Recall that, in the classical setting, two neighbouring datasets differ in at most one single entry. 
In the two mentioned works, 
the adopted definitions of neighbouring quantum states are significantly different, and are respectively based on bounded trace distance and single-register measurements. For the purpose of this paper, we follow the definition of \cite{quantumDP}. 
Moreover, quantum private PAC learning has been defined in \cite{arunachalam2020quantum}, and a quantum analog of the equivalence between private classification and online prediction has been shown in \cite{arunachalam2021private}.

{\paragraph{Our contributions.} In this paper, we initiate a systematic study of differential privacy amplification in quantum and quantum-inspired algorithms. 
We provide three types of results. 
\secref{sec:encoding} provides privacy amplification results when the (classical) data is encoded into a quantum state. 
Informally, we first show that quantum encoding of classical datasets leads to approximate classical differential privacy. Moreover, we show that a quantum DP operation performed on the quantum encoding of a classical dataset satisfies also classical DP, under some suitable assumptions.  These two primary results show a general application of quantum information and quantum encoding for differential privacy and can be employed further to design sophisticated differentially private mechanisms both in the classical and quantum setting. Moreover, we prove that the composition of quantum encoding with noisy mechanisms such as the Laplace and Gaussian mechanisms amplifies differential privacy. 

Similarly, \secref{sec:inspired} investigates the case of \emph{quantum-inspired} algorithms, a family of classical algorithms equipped with an $\ell_2$-norm sampling oracle. We show that differential privacy, both in the exact and approximate setting, is amplified via quantum-inspired subsampling, establishing the concrete amplification bounds.

Finally, our last results concern quantum differential privacy. As for classical DP, quantum DP is preserved under post-processing \cite{quantumDP}. In \secref{sec:pp}, we show an amplification result for the post-processed quantum channel $S\circ T$, provided that $T$ satisfies a quantum analog of the Dobrushin or the Doeblin condition. This is yet another general result that relies on the contraction property of quantum channels and can be exploited to enhance differential privacy. We expand this result by finding explicit bounds for differential privacy under the mentioned condition for the composition of two well-known quantum channels, namely the \emph{generalized amplitude damping channel} and \emph{depolarizing channel}.
Furthermore, we show that the Dobrushin condition together with unitality provides pure quantum differential privacy.

\section{Preliminaries} \label{sec:pre}
We start by introducing notation and concepts that will be used throughout the paper.
\paragraph{Quantum information.} We briefly review the basic concepts in quantum information that we will use throughout the paper.
A $d$-dimensional \textit{pure state} is a unit vector in $\mathbb{C}^{d}$,
written in ket notation as%
\[
\left\vert \psi\right\rangle =\sum_{i=1}^{d}\alpha_{i}\left\vert
i\right\rangle .
\]
Here $\left\vert 1\right\rangle ,\ldots,\left\vert d\right\rangle $\ is an
orthonormal basis for $\mathbb{C}^{d}$,\ and the $\alpha_{i}$'s are complex
numbers called \textit{amplitudes} satisfying $\left\vert \alpha
_{1}\right\vert ^{2}+\cdots+\left\vert \alpha_{d}\right\vert ^{2}=1$. \
The notation $\ket{\cdot}$ reminds of the fact that the Hilbert space has an inner product $\braket{\cdot,\cdot}$, which for Hilbert spaces describing quantum systems is denoted as $\braket{\cdot|\cdot}$.
The left side of the inner product $\bra{\psi}$ is the conjugate transpose of the quantum state $\ket{\psi}$. Then the unit-norm condition can be expressed succinctly as $\braket{\psi|\psi}=1$.


In general, we may also have classical probability distributions over pure states. This scenario is captured by the formalism of \emph{mixed} states, which generalises all the states in quantum mechanics including pure states. Mixed states are conveniently described by density matrices. Formally, a $d$-dimensional mixed state $\rho$ is a $d \times d$ positive semidefinite matrix that satisfies $\text{Tr}(\rho)=1$. Equivalently, $\rho$ is a convex combination of outer products of pure states with themselves:
\[ \rho = \sum_{i=1}^d p_i\ket{\psi_i}\bra{\psi_i},\]
where $p_i \geq 0$ and $\sum_i p_i =1$. In the special case where $p_i=1$, we obtain a pure state $\rho = \ket{\psi_i}\bra{\psi_i}$.
Several norms and distance measures can also be defined for general quantum states. Given a Hermitian matrix $A$ with eigenvalues $\lambda_1,\dots,\lambda_k$, its trace norm is defined as $||A||_\tr:=\frac{1}{2}\sum_{i=1}^d\lambda_i$.
The trace norm induces the \emph{trace distance} $||\rho-\sigma||_{\tr}=Tr(|\rho-\sigma|)/2$.
For a pair of pure states, the trace distance can be linked to their inner product, 
\begin{equation}
\label{eq:inner}
    ||\ket{\psi}\bra{\psi}-\ket{\phi}\bra{\phi}||_{\tr}=\sqrt{1-|\braket{\psi|\phi}|^2}.
\end{equation}


A \emph{superoperator} $S$ maps a mixed state $\rho$ to the mixed state $S(\rho) = \sum_{i=1}^k B_i \rho B_i^\dag$,
where $B_1,\ldots,B_k$ can be any matrices satisfying $\sum_{i=1}^k B_i^\dag B_i = \mathbb{I}.$
This is the most general (norm-preserving) mapping from mixed states to mixed states allowed by quantum mechanics. 
If we drop the norm-preserving condition and we let $\sum_{i=1}^k B_i^\dag B_i \preceq \mathbb{I} $, then we call $S$ a \emph{quantum operation}.
Quantum operations act linearly on mixed states, in the sense that for any $a,b \in \mathbb{C}$, $S(a\rho + b\sigma)=aS(\rho)+bS(\sigma)$.
Moreover, quantum operations are \emph{non trace-increasing}. For any hermitian matrix $A$, $||S(A)||_{\tr} \leq ||A||_{\tr}$. In particular, $||S(\rho)-S(\sigma)||_{\tr} \leq ||\rho -\sigma||_{\tr}$.

The most general class of measurements to perform on mixed states are the POVMs (Positive Operator Valued Measures). In the POVM formalism, a measurement $M$ with possible outcomes $1,2,\dots k$ is given by a list of $d\times d$ positive semidefinite matrices $E_1,\ldots, E_{k}$, which
satisfy $\sum_i E_i = \mathbb{I}$. The measurement rule is:
\[
\Pr[M(\rho)\text{ returns outcome }i]=Tr(E_i\rho).
\]
Notably, trace distance has also the following physical interpretation
\begin{equation}
\label{eq:trace}
    ||\rho-\sigma||_\tr = \max_M \Pr[M(\rho)\text{ accepts}] - \Pr[M(\sigma)\text{ accepts}], 
\end{equation}
where the maximum is taken over all possible two-outcome measurements.
We also define the infinity norm of a matrix $A$ as the maximum of the absolute row sum value as follows,
\begin{equation}
\label{eq:infnormmatrix}
    ||A||_{\infty} = \max_i \sum^n_j |a_{ij}|, 
\end{equation}

Furthermore, we need to define the operator infinity norm or $\ell^{\infty}$-norm. The vector space $\ell^\infty$ is a sequence space whose elements are the bounded sequences. The $\ell^\infty$ space in a Banach space with respect to the following norm,
\begin{equation}
\label{eq:infnorm-op}
    ||x||_{\infty} = \sup_n |x_n|, 
\end{equation}

The operator norm on the Hilbert space is defined over the space of bounded linear operators as,
\begin{equation}
\label{eq:infnorm-hilbert}
    ||O||_{\infty} = \sup ||Ox|| : \forall ||x|| \leq 1, 
\end{equation}

We also note that for the operator norms, $||.||_1$ is the dual norm of $||.||_{\infty}$ \cite{dobrushin3}.

\paragraph{Differential privacy.}
In the standard model of differential privacy, a trusted curator collects the raw data of the individuals and is responsible for their privacy. On the contrary, in the \emph{local} model the curator is possibly malicious, and hence each individual submits their own privatized data.
More formally, consider a statistical dataset, i.e.  a vector $x = (x_1, \dots, x_n)$ over a domain $X$, where each entry $x_i \in X$ represents information contributed by a single individual. datasets $x$ and $x'$ are neighbors if $x_i \neq x_i'$ for exactly one $i \in [n]$. We denote the neighbor relation with $x \sim x'$.
A randomized algorithm $\mathcal{A}$ is $(\epsilon,\delta)$-differentially private if for any two neighbor datasets $x$, $x'$ and for every subset $F$ of the possible outcomes of $\mathcal{A}$ we have
\[\Pr[\mathcal{A}(x) \in F] \leq e^\epsilon\Pr[\mathcal{A}(x') \in F] +\delta.\]

We denote as \emph{pure} differential privacy the special case where $\delta = 0$, while in the most general case we have \emph{approximate} differential privacy. 
One popular method to ensure $(\epsilon,0)$-DP is the \emph{Laplace mechanism}. Given a function $f:\mathcal{X}^n\xrightarrow{} \mathbb{R}k$ we define its $\ell_1$-sensitivity as $\Delta=\max_{x\sim x'}|f(x')-f(x)|$. The Laplace mechanism consists in adding a random perturbation $\eta$ to $f(x)$, where $\eta \sim \mathsf{Laplace}(\Delta/\epsilon):=\frac{\epsilon}{2\Delta}\exp\left(-|\eta|\frac{\epsilon}{\Delta}\right)$.

An additional widely-used method is the Gaussian mechanism, that ensures $(\epsilon,\delta)$-DP. Given a function $f$ as defined above, the Gaussian mechanism consists in adding a random perturbation $\eta$ to $f(x)$, where $\eta \sim \mathcal{N}(0,\sigma^2):=\frac{1}{\sqrt{2\pi\sigma^2}}\exp\left({-\frac{\eta^2}{2\sigma^2}}\right)$ and $\sigma^2=2 \ln(1.25/\delta)\Delta^2/\epsilon^2$.

We now turn our attention to the \emph{local model}. Following the notation used in \cite{equiv}, we say that a randomized algorithm over datasets is $(\epsilon,\delta)$-local differentially private if it's an $(\epsilon,\delta)$-differentially private
algorithm that takes in input a dataset of size $n=1$. 

The most common mechanism for local differential privacy is \emph{randomized response} (RR). For a dataset $x=(x_1,\dots,x_n)\in\{0,1\}^n$, each user $x_i$ outputs a random bit $z_i$, such that $z_i=x_i$ with probability $\frac{e^\epsilon+\delta}{1+e^\epsilon}$ and $z_i=1-x_i$ with probability $\frac{1-\delta}{1+e^\epsilon}$.
It is easy to see that any algorithm run on $z=(z_1,\dots,z_n)$ is $(\epsilon,\delta)$-local differentially private.

Interestingly, differential privacy is related to several desired learnability properties, including robustness, stability and generalization.
Concerning robustness to adversarial examples, we recall here the result stated in (Lemma 1, \cite{lecuyer2019certified}).
Let $B_p(r):=\{\alpha \in \mathbb{R}^n:||\alpha||_p\leq r\}$ be the $p$-norm ball of radius $r$. For a given classification model $f$ and a fixed input $x\in \mathbb{R}^n$, an attacker is able to craft a successful adversarial example of size $L$ for a given $p$-norm if they find $\alpha \in B_p(L)$ such as $f(x+\alpha)\neq f(x)$. The attacker thus tries to find a small change to $x$ that will significantly change the predicted label.
Now, suppose that a randomized function $A$, with bounded output $A(x)\in[0,b]$, $b\in\mathbb{R}^+$, satisfies $(\epsilon,\delta)$-DP. Then the expected value
of its output meets the following property:
\[
\forall \alpha \in B_p(1) : \mathbb{E}(A(x))\leq e^\epsilon \mathbb{E}(A(x+\alpha))+b\delta
\]
where the expectation is taken over the randomness in $A$.

Following the approach proposed in \cite{quantumDP}, we say that a quantum operation $\mathcal{E}$ is $(\tau, \epsilon, \delta)$-quantum differentially private (QDP), if for every POVM $M$, for all subset $S$ of the possible outcomes, and for all inputs $\rho,\sigma$ such that $||\rho-\sigma||_\tr \leq \tau$,
\begin{align}
\begin{split}
    \Pr[M(\mathcal{E}(\rho))\text{ output is in }S]\leq e^\epsilon 
    \Pr[M(\mathcal{E}(\sigma))\text{ output is in }S] +\delta.
\end{split}
\end{align}

\begin{thm}[Proposition 1, \cite{quantumDP}]
\label{thm:pp}
Let $\mathcal{E}$ be a quantum operation that is $(\tau,\epsilon,\delta)$-QDP. Let $\mathcal{F}$ be an arbitrary
quantum operation. Then the composition of $\mathcal{E}$ and $\mathcal{F}$
\[
\mathcal{F}\circ\mathcal{E}:\rho \mapsto \mathcal{F}(\mathcal{E}(\rho))
\]
is $(\tau,\epsilon,\delta)$-QDP.
\end{thm}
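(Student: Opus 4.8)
The plan is to exploit the standard data-processing principle: post-processing by $\mathcal{F}$ can be absorbed into the measurement, thereby reducing the claim to the QDP guarantee already assumed for $\mathcal{E}$. Concretely, I would fix an arbitrary POVM $M$ with elements $\{E_i\}$ and an arbitrary outcome subset $S$, and write the acceptance probability of the composed channel as
\[
\Pr[M(\mathcal{F}(\mathcal{E}(\rho)))\in S]=\sum_{i\in S}\Tr\bigl(E_i\,\mathcal{F}(\mathcal{E}(\rho))\bigr).
\]
Expanding $\mathcal{F}$ through its operator-sum (Kraus) representation $\mathcal{F}(\cdot)=\sum_j B_j(\cdot)B_j^\dag$ with $\sum_j B_j^\dag B_j\preceq\mathbb{I}$, and using cyclicity of the trace, each summand becomes $\Tr\bigl(B_j^\dag E_i B_j\,\mathcal{E}(\rho)\bigr)$. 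This motivates defining the pulled-back operators $F_i:=\sum_j B_j^\dag E_i B_j$, i.e.\ the result of applying the adjoint of $\mathcal{F}$ to each $E_i$.

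The second step is to verify that $\{F_i\}$ constitutes (up to one extra outcome) a legitimate POVM on the input space, so that the QDP hypothesis on $\mathcal{E}$ can be invoked against it. Each $F_i$ is positive semidefinite, since $E_i\succeq0$ conjugated by $B_j$ stays positive and positivity is preserved under summation. Summing over all outcomes gives $\sum_i F_i=\sum_j B_j^\dag\bigl(\sum_i E_i\bigr)B_j=\sum_j B_j^\dag B_j\preceq\mathbb{I}$, using $\sum_i E_i=\mathbb{I}$. By construction the measurement $M'=\{F_i\}$ reproduces the original acceptance probability, namely $\Pr[M'(\mathcal{E}(\rho))\in S]=\Pr[M(\mathcal{F}(\mathcal{E}(\rho)))\in S]$, and likewise for $\sigma$.

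Once $M'$ is identified as a valid measurement, the conclusion is immediate. Applying the $(\tau,\epsilon,\delta)$-QDP guarantee of $\mathcal{E}$ to the measurement $M'$ and outcome subset $S$ yields $\Pr[M'(\mathcal{E}(\rho))\in S]\le e^\epsilon\Pr[M'(\mathcal{E}(\sigma))\in S]+\delta$ for every pair $\rho,\sigma$ with $\|\rho-\sigma\|_\tr\le\tau$. Rewriting both sides back in terms of $M\circ\mathcal{F}$ gives exactly the QDP inequality for $\mathcal{F}\circ\mathcal{E}$, and since $M$ and $S$ were arbitrary this establishes the theorem.

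The main subtlety I anticipate is that $\mathcal{F}$ is only a quantum operation rather than a trace-preserving superoperator, so $\sum_j B_j^\dag B_j\preceq\mathbb{I}$ need not be an equality and the pulled-back operators satisfy $\sum_i F_i\preceq\mathbb{I}$ with possible strict inequality. Thus $\{F_i\}$ is a sub-normalized measurement; to treat it as a genuine POVM one appends the complementary element $F_0:=\mathbb{I}-\sum_i F_i\succeq0$, which collects the weight lost to the non-trace-preserving part of $\mathcal{F}$. Since the outcome set $S$ never contains this auxiliary outcome, the relevant acceptance probabilities are unaffected and the argument carries through unchanged.
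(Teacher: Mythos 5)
The paper states this theorem without proof, importing it as Proposition~1 of \cite{quantumDP}, and the proof given in that reference is essentially the argument you propose: absorb the post-processing operation into the measurement. Your write-up is correct, including the one genuine subtlety — that for a trace non-increasing $\mathcal{F}$ the pulled-back operators $F_i=\sum_j B_j^\dag E_i B_j$ are only sub-normalized, which you fix by appending the complementary element $\mathbb{I}-\sum_i F_i\succeq 0$ before invoking the $(\tau,\epsilon,\delta)$-QDP guarantee of $\mathcal{E}$ against the resulting POVM.
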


\section{Amplification by quantum encoding}
\label{sec:encoding}
In quantum computation, a classical dataset $x\in \mathcal{X}$ can be mapped to a quantum state with a \emph{data-encoding feature map}, also referred as quantum encoding, that is a classical-to-quantum transformation \[\phi(x)=\ket{\phi(x)}\bra{\phi(x)}=\rho(x).\]

Given a dataset $x = (x_1,\dots,x_n)$, where each $x_i$ is a binary string, one of the most common information encoding strategy is the \emph{basis encoding}, which is described by a uniform superposition of computational basis states
\[
x \mapsto \frac{1}{\sqrt{n}}\sum_{i=1}^n \ket{x_i}.
\]
For a complex value dataset $x\in \mathbb{C}^n$, it's convenient to adopt the \emph{amplitude encoding},
\[
x \mapsto \sum_{i=1}^n  x_i \ket{i}.
\]
We always assume that the input vector $x$ is normalised as $||x||^2=\sum_i |x_i|^2 = 1$. For convenience, we set the following parameter, that will be employed in the following. 
\begin{equation}
\label{eq:delta}
    \Gamma(x):= \max_j |x_j|^2.
\end{equation}

For a dataset $x\in[0,2\pi]^n$ we can define the \emph{rotation encoding},
\[
x \mapsto \sum_{q_1\dots q_n=0}^1 \prod_{k=1}^n \cos(x_k)^{q_k} \sin(x_k)^{1-q_k}\ket{q_1,\dots, q_k}.
\]

As noted in \cite{schuld2021supervised}, a quantum encoding gives rise to a \emph{quantum kernel}, which is the inner product between two data-encoding feature vectors. For any $x,x'\in \mathcal{X}$, the quantum kernel induced by $\phi$ is
\begin{equation}
\label{eq:kappa}
\kappa_\phi(x,x')=||\rho(x)\rho(x')||_\tr= |\braket{\phi(x)|\phi(x')}|^2.
\end{equation}
Since we are dealing with differential privacy, quantum kernels evaluated on neighbor inputs are of particular interest. To this end, we define the \emph{quantum minimum adjacent kernel} \[\hat\kappa_\phi:=\min_{x\sim x'}\kappa_\phi(x,x').\]
The expressions of $\kappa_\phi$ and $\hat\kappa_\phi$ for the quantum encodings defined above can be found in \tableref{table1}. We refer to \cite{schuld2021supervised} for more details and more examples of quantum kernels.

We observe that the minimum adjacent kernels allows us to connect the quantum and classical definition of differential privacy.

\begin{lemma}[Quantum-to-classical DP]
Let $x\in \mathcal{X}$ and let $\mathcal{A}$ be a quantum algorithm that takes as input only $\rho(x)=\ket{\phi(x)}\bra{\phi(x)}$ and perform a $(\sqrt{1-\hat\kappa_\phi},\epsilon,\delta)$-QDP quantum operation $\mathcal{E}$ on $\rho(x)$. Then $\mathcal{A}\circ \rho$ is $(\epsilon,\delta)$-DP.
\end{lemma}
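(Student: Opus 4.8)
The plan is to reduce the classical differential privacy of $\mathcal{A}\circ\rho$ directly to the quantum differential privacy of $\mathcal{E}$, exploiting the fact that the feature map $\phi$ sends any pair of neighboring classical datasets to pure states whose trace distance is controlled by the minimum adjacent kernel $\hat\kappa_\phi$.

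First I would fix two arbitrary neighboring datasets $x\sim x'$ and bound the trace distance between their encodings $\rho(x)=\ket{\phi(x)}\bra{\phi(x)}$ and $\rho(x')=\ket{\phi(x')}\bra{\phi(x')}$. Since both are pure states, \eqref{eq:inner} gives
\[
||\rho(x)-\rho(x')||_\tr=\sqrt{1-|\braket{\phi(x)|\phi(x')}|^2}=\sqrt{1-\kappa_\phi(x,x')}.
\]
By definition of the minimum adjacent kernel $\hat\kappa_\phi=\min_{x\sim x'}\kappa_\phi(x,x')$ we have $\kappa_\phi(x,x')\geq\hat\kappa_\phi$, hence $||\rho(x)-\rho(x')||_\tr\leq\sqrt{1-\hat\kappa_\phi}=:\tau$. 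This is exactly the radius of the QDP neighboring ball for which $\mathcal{E}$ is assumed private, so the two encoded states qualify as QDP-neighbors.

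Next I would unfold what it means for $\mathcal{A}\circ\rho$ to produce a classical output. The algorithm encodes $x$, applies $\mathcal{E}$, and finally reads off a classical label via a POVM $M$; any quantum post-processing inserted between $\mathcal{E}$ and the measurement preserves the QDP parameters by \thmref{thm:pp}, so it can be absorbed without affecting $(\tau,\epsilon,\delta)$. Hence the law of $(\mathcal{A}\circ\rho)(x)$ coincides with the measurement statistics $\Pr[M(\mathcal{E}(\rho(x)))\in\,\cdot\,]$ appearing in the definition of QDP. Invoking the $(\tau,\epsilon,\delta)$-QDP property of $\mathcal{E}$ together with the trace-distance bound from the previous step, for every subset $F$ of outcomes,
\[
\Pr[(\mathcal{A}\circ\rho)(x)\in F]=\Pr[M(\mathcal{E}(\rho(x)))\in F]\leq e^{\epsilon}\Pr[M(\mathcal{E}(\rho(x')))\in F]+\delta=e^{\epsilon}\Pr[(\mathcal{A}\circ\rho)(x')\in F]+\delta,
\]
which is precisely the classical $(\epsilon,\delta)$-DP inequality. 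Since $x\sim x'$ and $F$ were arbitrary, the claim follows.

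I do not expect a genuinely hard computational step: the single substantive identity is the pure-state trace-distance formula \eqref{eq:inner}, and the rest is matching definitions. The main point to handle carefully is conceptual, namely arguing that the full classical-output behavior of $\mathcal{A}$ is captured by a POVM applied to $\mathcal{E}(\rho(x))$—so that nothing beyond the QDP-protected state can leak—and that worst-casing over neighbors is exactly what the minimization defining $\hat\kappa_\phi$ accomplishes.
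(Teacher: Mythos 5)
Your proposal is correct and follows essentially the same route as the paper's proof: bound the trace distance of neighboring encodings by $\sqrt{1-\hat\kappa_\phi}$ via the pure-state formula \eqref{eq:inner}, invoke the $(\sqrt{1-\hat\kappa_\phi},\epsilon,\delta)$-QDP guarantee of $\mathcal{E}$, and absorb any subsequent processing using post-processing robustness (\thmref{thm:pp}). Your write-up is, if anything, slightly more explicit than the paper's about why the classical output law of $\mathcal{A}\circ\rho$ is captured by a POVM applied to $\mathcal{E}(\rho(x))$, but this is a matter of exposition, not a different argument.
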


\begin{proof}
Let $x,x'$ two neighbouring datasets and $\ket{\phi(x)},\ket{\phi(x')}$ their corresponding encodings. By definition, their trace distance is upper bounded by the \emph{minimum adjacent kernel},
\begin{align*}
   ||\rho(x)-\rho(x')||_{\tr} \leq \sqrt{1-|\braket{\phi(x)|\phi(x')}|^2}
:= \sqrt{1-\kappa_\phi(x,x')}\leq \sqrt{1-\hat\kappa_\phi}. 
\end{align*}
By definition of \emph{quantum differential privacy}, for any measurement $M$, and for any subset $S$ of the possible outcomes,
\begin{align*}
    \Pr[M(\mathcal{E}(\rho(x)))\text{ output is in }S]\leq e^\epsilon 
    \Pr[M(\mathcal{E}(\rho(x')))\text{ output is in }S] +\delta.
\end{align*}
Since quantum DP is robust to post-processing (\thmref{thm:pp}), the former inequality implies that the algorithm $\mathcal{A}\circ \rho$ is $(\epsilon,\delta)$-DP.
\end{proof}

Moreover, we can show that if $\hat\kappa_\phi$ is larger than $0$, then any quantum algorithm that accesses solely the quantum encoding of a classical dataset satisfies approximate differential privacy.

\begin{table}[t]
\caption{Quantum kernels and quantum minimum adjacent kernels for several quantum encodings. Here $\delta_{x,y}$ is the Kronecker function and $\Gamma(x)$ is the parameter defined in \eqref{eq:delta}.}
\label{table1}
\vskip 0.15in
\begin{center}
\begin{small}
\begin{sc}
\begin{tabular}{lcccr}
\toprule
Encoding $\phi$ & $\kappa_\phi(x,x')$ & $\hat\kappa_\phi$ \\
\midrule
Basis encoding    & $\sum_i\delta_{x_i,x'_i}$ & $1-{1}/{n}$\\
Amplitude encoding & $|x^\dag x'|^2$&  $1 - \Gamma(x)$\\
Rotation encoding & $\prod_i |\cos(x_i-x_i')|^2$& $0$ \\

\bottomrule
\end{tabular}
\end{sc}
\end{small}
\end{center}
\vskip -0.1in
\end{table}

\begin{lemma}[Approximate DP by quantum encoding]
\label{lem:adp}
Let $x\in \mathcal{X}$ and let $\mathcal{A}$ be a quantum algorithm that takes as input only $\rho(x)=\ket{\phi(x)}\bra{\phi(x)}$. Then $\mathcal{A}\circ \rho$ is $(0,\sqrt{1-\hat\kappa_\phi})$-DP.
\end{lemma}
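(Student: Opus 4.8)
The plan is to bypass the QDP hypothesis of the previous lemma and instead control the classical output distribution of $\mathcal{A}$ directly through the trace distance of the encoded states. First I would fix two neighbouring datasets $x \sim x'$ with encodings $\rho(x) = \ket{\phi(x)}\bra{\phi(x)}$ and $\rho(x') = \ket{\phi(x')}\bra{\phi(x')}$. Exactly as in the proof of the preceding lemma, combining \eqref{eq:inner} with the definition of the minimum adjacent kernel $\hat\kappa_\phi$ yields
\[
||\rho(x) - \rho(x')||_\tr \leq \sqrt{1 - \kappa_\phi(x,x')} \leq \sqrt{1 - \hat\kappa_\phi}.
\]

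Next I would model the quantum algorithm $\mathcal{A}$ in its most general form: a quantum operation (superoperator) $S$ applied to the input state, followed by a POVM $M = \{E_i\}$ whose outcome is read off as the classical output. For any subset $F$ of outcomes, the event $\{\mathcal{A}(\rho) \in F\}$ is then a two-outcome measurement applied to $S(\rho)$, accepting precisely when the outcome lies in $F$, so that $\Pr[\mathcal{A}(\rho) \in F] = \sum_{i \in F} \Tr(E_i\, S(\rho))$. The physical interpretation of trace distance in \eqref{eq:trace} then bounds the gap between the two inputs by the trace distance of the post-operation states,
\[
\Pr[\mathcal{A}(\rho(x)) \in F] - \Pr[\mathcal{A}(\rho(x')) \in F] \leq ||S(\rho(x)) - S(\rho(x'))||_\tr.
\]

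The remaining step is to transfer this back to the inputs. Since $\rho(x) - \rho(x')$ is Hermitian and quantum operations are non trace-increasing, contractivity gives $||S(\rho(x)) - S(\rho(x'))||_\tr \leq ||\rho(x) - \rho(x')||_\tr$. Chaining the three estimates produces
\[
\Pr[\mathcal{A}(\rho(x)) \in F] \leq \Pr[\mathcal{A}(\rho(x')) \in F] + \sqrt{1 - \hat\kappa_\phi},
\]
which, recalling $e^0 = 1$, is exactly the definition of $(0, \sqrt{1 - \hat\kappa_\phi})$-DP. I do not expect a genuine obstacle here, as the argument is a direct chain of contractivity estimates; the only point demanding care is the reduction of the classical event $F$ to a single two-outcome measurement so that \eqref{eq:trace} applies verbatim, together with the observation that this reduction is uniform over all $F$ and over the choice of $S$ and $M$ hidden inside $\mathcal{A}$.
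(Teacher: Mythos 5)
Your proposal is correct and follows essentially the same route as the paper's proof: decompose $\mathcal{A}$ into a quantum operation $S$ followed by a POVM, reduce the event $\{\mathcal{A}(\rho)\in F\}$ to a two-outcome measurement so that \eqref{eq:trace} applies, then chain contractivity of $S$ with the bound $||\rho(x)-\rho(x')||_\tr \leq \sqrt{1-\hat\kappa_\phi}$ from \eqref{eq:inner}. The only cosmetic difference is that you make the grouped POVM element $\sum_{i\in F} E_i$ explicit, which the paper leaves implicit in its definition of the measurement $M'$.
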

\begin{proof}
Since differential privacy is preserved under post-processing, we can assume that $\mathcal{A}$ consists of a quantum operation $S$ followed by a POVM measurement $M$.
Let $F$ be a subset of the possible outcomes of $M$. We define the two-outcome measurement $M'$ such that $M'$ runs $M$ and accepts if the resulting outcome is in $F$, otherwise it rejects.
Plugging \eqref{eq:kappa} and \eqref{eq:inner} into \eqref{eq:trace}, we get the following bound:
\begin{align*}
    \Pr[M(S\ket{\phi(x)})\in F] - \Pr[M(S\ket{\phi(x')})\in F] \\ = \Pr[M'(S\ket{\phi(x)})\text{ accepts}] - \Pr[M'(S\ket{\phi(x')})\text{ accepts}] \\\leq ||S(\rho(x))-S(\rho(x'))||_{\tr}
    \leq ||\rho(x)-\rho(x')||_{\tr} = \sqrt{1-|\braket{\phi(x)|\phi(x')}|^2} \leq \sqrt{1-\hat\kappa_\phi}.
\end{align*}
Thus the algorithm $\mathcal{A}\circ \rho$ is $(0,\sqrt{1-\hat\kappa_\phi})$-DP.
\end{proof}

\begin{algorithm}[tb]
   \caption{Composition of quantum encoding and a noise-adding mechanism}
   \label{alg:laplace}
\begin{algorithmic}
   \STATE {\bfseries Input:} a dataset $x=(x_1,\dots,x_n)$, a POVM $M$ with outcomes in $\{0,1\}$, a distribution $\mathcal{D}$
    \FOR{$i=1$ {\bfseries to} $m$}
        \STATE {Perform the feature map $x\mapsto \rho(x)=\ket{\phi(x)}\bra{\phi(x)}$}
        \STATE Apply $M$ to $\rho(x)$ and store the output in $y_i$
    \ENDFOR
    \STATE Compute the mean $\mu=\frac{1}{m}\sum_{i=1}^m y_i$ and output $O=\mu + \eta$, where $\eta \sim \mathcal{D}$. 
\end{algorithmic}
\end{algorithm}

So far we have shown that quantum encodings inherently provide approximate differential privacy, under some suitable assumptions.
In the following, we study the interaction of quantum encoding and classical noise-adding mechanisms, as sketched in \algref{alg:laplace}.
In particular, we show DP amplification results for the Laplace and Gaussian mechanisms. 

\begin{thm}[Composition of quantum encoding and Laplace mechanism]
\label{thm:laplace}
Let $x,m,M,\rho$ be as in \algref{alg:laplace}.
Let $\mathcal{D}=\mathsf{Laplace}(\frac{1}{\epsilon}(\sqrt{1-\hat{\kappa}_\phi}+t))$ for any $t\geq0$. Then \algref{alg:laplace} is $(\epsilon,0)$-DP with exponentially high probability in $t$ and $m$.
\end{thm}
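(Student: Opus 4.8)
The plan is to reduce the whole algorithm to a standard Laplace mechanism by controlling the \emph{effective sensitivity} of the empirical mean $\mu$ with high probability, and then invoking the fact that adding $\mathsf{Laplace}(\Delta/\epsilon)$ noise to a quantity of sensitivity $\Delta$ yields $(\epsilon,0)$-DP. Concretely, I would first isolate the relevant deterministic quantity. Since $M$ has outcomes in $\{0,1\}$, let $E_1$ be the POVM element associated with the outcome $1$ and set $p(x):=\Tr(E_1\rho(x))=\Pr[M(\rho(x))=1]$. Exactly as in the proof of \lemref{lem:adp}, the physical interpretation of the trace distance in \eqref{eq:trace}, combined with \eqref{eq:inner}, \eqref{eq:kappa} and the definition of $\hat\kappa_\phi$, gives, for neighbouring $x\sim x'$,
\[
|p(x)-p(x')| \leq ||\rho(x)-\rho(x')||_\tr \leq \sqrt{1-\hat\kappa_\phi}.
\]
Thus the \emph{means} of the two runs are separated by at most $\sqrt{1-\hat\kappa_\phi}$, which plays the role of a worst-case sensitivity for the idealized infinite-sample estimator.

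The second step is to control the fluctuation of $\mu$ around its mean. For a fixed input, the stored values $y_1,\dots,y_m$ are i.i.d.\ $\mathsf{Bernoulli}(p(x))$ random variables bounded in $[0,1]$, so Hoeffding's inequality gives $\Pr[|\mu-p(x)|>t/2]\leq 2\exp(-mt^2/2)$, and likewise for the run on $x'$. Writing $|\mu-\mu'|\leq |\mu-p(x)|+|p(x)-p(x')|+|p(x')-\mu'|$ and applying a union bound, I would conclude that with probability at least $1-4\exp(-mt^2/2)$ over the measurement randomness of both runs, the realized empirical means obey
\[
|\mu-\mu'| \leq \sqrt{1-\hat\kappa_\phi}+t.
\]
The third step is the classical Laplace argument on the good event: since $\mathcal{D}=\mathsf{Laplace}(\tfrac1\epsilon(\sqrt{1-\hat\kappa_\phi}+t))$ has scale $b$ with $b\epsilon=\sqrt{1-\hat\kappa_\phi}+t$, the pointwise ratio of the shifted densities satisfies $\tfrac{f(o-\mu)}{f(o-\mu')}\leq e^{|\mu-\mu'|/b}\leq e^{\epsilon}$, yielding $(\epsilon,0)$-DP for the output $O=\mu+\eta$. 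As the failure probability $4\exp(-mt^2/2)$ is exponentially small in both $t$ and $m$, the stated guarantee follows.

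I expect the main obstacle to be the bookkeeping around the two \emph{independent} sources of measurement randomness: the bound $|\mu-\mu'|\leq\sqrt{1-\hat\kappa_\phi}+t$ lives in the product space of both runs, whereas differential privacy compares the two \emph{marginal} output distributions $g_x(o)=\mathbb{E}_\mu[f(o-\mu)]$ and $g_{x'}(o)=\mathbb{E}_{\mu'}[f(o-\mu')]$. The clean way to push the per-point density comparison through is to introduce a coupling of the two runs under which the event $|\mu-\mu'|\leq\sqrt{1-\hat\kappa_\phi}+t$ holds except on a set of exponentially small mass; on that event the $e^\epsilon$ ratio applies verbatim. A related subtlety is that genuine $(\epsilon,0)$-DP cannot hold deterministically here, since $\mu,\mu'\in[0,1]$ can differ by up to $1>\sqrt{1-\hat\kappa_\phi}+t$ with small positive probability; this is precisely why the statement is qualified by ``with exponentially high probability,'' and rigorously it is either a probabilistic-DP statement (pure $\epsilon$ conditioned on the good event) or an $(\epsilon,\delta)$ bound with $\delta=O(\exp(-mt^2/2))$ that one lets vanish. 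I would handle the conditioning carefully so that the coupled means still differ by at most the Laplace scale on the good event, which is the only place the argument is delicate.
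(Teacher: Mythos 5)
Your proposal follows essentially the same route as the paper's own proof: bound $|\Pr[M(\rho(x))=1]-\Pr[M(\rho(x'))=1]|\leq\sqrt{1-\hat\kappa_\phi}$ via the trace distance, apply Hoeffding's inequality and a union bound to get $|\mu-\mu'|\leq\sqrt{1-\hat\kappa_\phi}+t$ except with probability exponentially small in $m$ and $t$, and then run the standard Laplace density-ratio argument on that good event. Your closing discussion of the coupling/marginal subtlety and of why the guarantee must be probabilistic is actually more careful than the paper's treatment, which asserts the same conclusion without addressing that point.
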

\begin{proof}

Let $x,x'$ be two neighbouring datasets. Denote as $y_i'$ and $\mu'$ the outcomes of $M$ on input $\rho(x')$ and their mean, respectively. First, we make the following observation 
\begin{align*}
|\mathbb{E} [M(\rho(x)) - M(\rho(x'))]| = |\Pr[M(\rho(x))=1]-\Pr[M(\rho(x'))=1]| \leq \sqrt{1-\hat{\kappa}_\phi}.
\end{align*}
We apply now the Chernoff-Hoeffding's bound
\begin{align*}
    \Pr\left[\left|\frac{1}{m}\sum_{i=1}^m y_i - \mathbb{E}[M(\rho(x))]\right| \geq \frac{t}{2}\right]\leq 2e^{-mt^2}.
\end{align*}
Thus with probability $1-4e^{-mt^2}$ the means $\mu$ and $\mu'$ are within an additive factor $t + \sqrt{1-\hat{\kappa}_\phi}$.

Since $\eta \sim \mathsf{Laplace}((t + \sqrt{1-\hat{\kappa}_\phi})/\epsilon)$, we have that for any $z\in\mathbb{R}$, 
\begin{align*}
    \Pr[\mu+\eta=z] \leq e^\epsilon \Pr[\mu'+\eta=z]
\end{align*} 
with probability at least $1-4e^{-mt^2}$ . In other words, a Laplace perturbation of parameter $(t + \sqrt{1-\hat{\kappa}_\phi})/\epsilon$ ensures $(\epsilon,0)$-DP
with high probability in $t$ and $m$.
\end{proof}

\begin{thm}[Composition of quantum encoding and Gaussian mechanism]
Let $x,m,M,\rho$ be as in \algref{alg:laplace}.
Let $\mathcal{D}=\mathcal{N}(0,\sigma^2)$ where $\sigma^2=2 \ln(1.25/\delta)(\sqrt{1-\hat\kappa_\phi}+t)^2/\epsilon^2$ for any $t\geq0$. Then \algref{alg:laplace} is $(\epsilon,\delta)$-DP with exponentially high probability in $t$ and $m$.
\end{thm}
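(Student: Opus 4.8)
The plan is to follow the same three-step scheme used in the proof of \thmref{thm:laplace}, replacing the final invocation of the Laplace mechanism with the Gaussian mechanism recalled in the preliminaries. First I would fix two neighbouring datasets $x \sim x'$ and let $y_i, y_i'$ denote the binary outcomes of the POVM $M$ on the encodings $\rho(x), \rho(x')$, with empirical means $\mu$ and $\mu'$. Exactly as in the Laplace case, combining the physical interpretation of the trace distance with the minimum-adjacent-kernel bound yields
\[
|\mathbb{E}[M(\rho(x))] - \mathbb{E}[M(\rho(x'))]| = |\Pr[M(\rho(x))=1] - \Pr[M(\rho(x'))=1]| \leq \sqrt{1-\hat{\kappa}_\phi}.
\]

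Second, I would apply the Chernoff--Hoeffding bound separately to $\mu$ and $\mu'$, each concentrating around its expectation within $t/2$ except with probability at most $2e^{-mt^2}$. A union bound then guarantees that, with probability at least $1 - 4e^{-mt^2}$, the two empirical means satisfy $|\mu - \mu'| \leq \sqrt{1-\hat{\kappa}_\phi} + t$. This is precisely the event that controls the effective $\ell_1$-sensitivity of the scalar statistic $\mu$, regarded as a (random but, on this event, bounded) function of the dataset.

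Third, conditioned on this concentration event, the sensitivity of $\mu$ is at most $\Delta := \sqrt{1-\hat{\kappa}_\phi} + t$. Since the prescribed noise is $\mathcal{N}(0,\sigma^2)$ with $\sigma^2 = 2\ln(1.25/\delta)\Delta^2/\epsilon^2$, this is exactly the variance that the Gaussian mechanism requires to guarantee $(\epsilon,\delta)$-DP for a statistic of sensitivity $\Delta$. Hence the output $O = \mu + \eta$ is $(\epsilon,\delta)$-DP on the concentration event, which occurs with probability at least $1 - 4e^{-mt^2}$, i.e. with exponentially high probability in $t$ and $m$.

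Because the argument is structurally almost identical to the Laplace case, I do not expect a genuine obstacle. The one point requiring care is the probabilistic conditioning: the sensitivity bound, and therefore the DP guarantee, holds only on the concentration event, so the conclusion is necessarily phrased as ``$(\epsilon,\delta)$-DP with high probability'' rather than deterministically. I would also explicitly note that for the one-dimensional output $\mu$ the $\ell_1$- and $\ell_2$-sensitivities coincide, so that the standard Gaussian-mechanism variance formula applies verbatim with $\Delta = \sqrt{1-\hat{\kappa}_\phi}+t$.
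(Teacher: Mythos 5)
Your proposal is correct and follows essentially the same route as the paper: the expectation gap is bounded by $\sqrt{1-\hat{\kappa}_\phi}$ via the minimum adjacent kernel, Chernoff--Hoeffding plus a union bound gives $|\mu-\mu'|\leq \sqrt{1-\hat{\kappa}_\phi}+t$ with probability at least $1-4e^{-mt^2}$, and the Gaussian mechanism with variance $2\ln(1.25/\delta)(\sqrt{1-\hat{\kappa}_\phi}+t)^2/\epsilon^2$ then yields $(\epsilon,\delta)$-DP on that event. The only difference is presentational: the paper delegates the concentration step to the Laplace proof by reference, whereas you spell it out and add the (correct) clarifications about conditioning and the coincidence of $\ell_1$/$\ell_2$ sensitivity in one dimension.
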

\begin{proof}
Let $x,x'$ be two neighbouring datasets. Denote as $y_i'$ and $\mu'$ the outcomes of $M$ on input $\rho(x')$ and their mean, respectively.
By applying the same arguments of the proof of \thmref{thm:laplace}, we show that, with probability $1-4e^{-mt^2}$ the means $\mu$ and $\mu'$ are within an additive factor $t + \sqrt{1-\hat{\kappa}_\phi}$.

Since $\eta \sim \mathcal{N}(0,\sigma^2)$ with $\sigma^2=2 \ln(1.25/\delta)(\sqrt{1-\hat\kappa_\phi}+t)^2/\epsilon^2$, we have that for any $z\in\mathbb{R}$, 
\begin{align*}
    \Pr[\mu+\eta=z] \leq e^\epsilon \Pr[\mu'+\eta=z] +\delta.
\end{align*} 
with probability at least $1-4e^{-mt^2}$ . So we proved that a Gaussian perturbation of parameter $\sigma^2$ ensures $(\epsilon,\delta)$-DP
with high probability in $t$ and $m$.
\end{proof}

We can obtain explicit privacy amplification bounds by combining the values of $\hat\kappa_\phi$ in \tableref{table1} with the results of this section. Remark that, unlike the basis and the amplitude encoding, the rotation encoding provides no privacy amplification, as $\hat\kappa_\phi=0$ in that case.

\section{Amplification by quantum-inspired sampling}
\label{sec:inspired}

In quantum-inspired algorithms \cite{tang1, tang2, tang3, tang4}, we simulate the measurement of $\ket{x}^{\otimes m}$ in the computational basis and process the outcomes with a classical algorithm. 
Quantum-inspired subsampling generalizes the uniform subsampling. Indeed, uniform subsampling can be recovered as a special case when $\Gamma(x)=1/n$.

We will show the intuitive fact that quantum-inspired subsampling amplifies DP. The proof closely follows the one of (Problem 1.b, \cite{ullman}) for uniform subsampling, but we include it here for completeness.
Given a normalised vector $x=(x_1,\dots,x_n)\in \mathbb{C}^n$, let $\ket{x}:=\sum_{i=1}^n x_i\ket{i}$ be the amplitude encoding defined in the previous section.
\begin{thm}[DP amplification by quantum-inspired sampling]
\label{lemma: quantum-inspired}
For any $x\in \mathbb{C}^n$, let $s=(s_1,\dots, s_m)$ be the measurement outcomes in the computational basis of $\ket{x}^{\otimes m}$. Denote $\mathcal{S}$ as the sampling mechanism that maps $x$ into $s$.
Let $\mathcal{A}$ be a $(\epsilon,\delta)$-DP algorithm that takes only $s$ as input.  Then $\mathcal{A'}=\mathcal{A}\circ \mathcal{S}$ is $(\epsilon',\delta')$-DP, with $\epsilon' = \log(1+ (e^\epsilon-1)\Gamma(x) m)$ and $\delta'= \delta \Gamma(x) m$.
\end{thm}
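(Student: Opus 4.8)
The plan is to adapt the classical privacy-amplification-by-subsampling argument to the index tuples produced by $\mathcal S$. Fix two neighbouring inputs $x \sim x'$ that differ only in coordinate $k$, so that the difference between the two sampling distributions is localized at the single index $k$. I view $s=(s_1,\dots,s_m)\in[n]^m$ as $m$ i.i.d. draws and isolate the one coordinate on which $\mathcal A$ is allowed to be sensitive. The central quantitative fact is that a given draw equals the sensitive index $k$ with probability $|x_k|^2\le\Gamma(x)$; by a union bound over the $m$ draws, the bad event $B$ that $k$ appears at all among $s_1,\dots,s_m$ satisfies $\Pr[B]\le m\,\Gamma(x)=:\gamma$. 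This $\gamma$ will play exactly the role of the subsampling rate in the classical bound, which is why the final parameters come out as $\epsilon'=\log(1+(e^\epsilon-1)\gamma)$ and $\delta'=\gamma\delta$.

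Next I would decompose the output law of $\mathcal A\circ\mathcal S$ under each input into a component conditioned on the good event $\overline B$ (the index $k$ is never sampled) and a component conditioned on $B$. On $\overline B$ the draws avoid $k$ entirely, so the samples under $x$ and $x'$ can be coupled to coincide, and the good components of $\mathcal A(\mathcal S(x))$ and $\mathcal A(\mathcal S(x'))$ are the \emph{same} distribution $\nu_0$. Writing $\mathcal A(\mathcal S(x))\sim(1-\gamma)\nu_0+\gamma\nu_1$ and $\mathcal A(\mathcal S(x'))\sim(1-\gamma)\nu_0+\gamma\nu_1'$ (inflating the mixing weight to $\gamma$, which is legitimate by the monotonicity used below), I then invoke the $(\epsilon,\delta)$-DP of $\mathcal A$ in \emph{two} ways: conditioned on $B$, the tuples fed to $\mathcal A$ under $x$ and $x'$ agree except in the coordinate carrying $k$, so $\nu_1$ and $\nu_1'$ are $(\epsilon,\delta)$-indistinguishable; and a tuple that samples $k$ differs in a single coordinate from one that does not, so $\nu_1$ is likewise $(\epsilon,\delta)$-indistinguishable from the shared component $\nu_0$.

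The heart of the argument is then a purely numerical combination. For a fixed output event $F$ set $a=\nu_0(F)$, $b=\nu_1(F)$, $b'=\nu_1'(F)$, so that $\Pr[\mathcal A(\mathcal S(x))\in F]=(1-\gamma)a+\gamma b$ and $\Pr[\mathcal A(\mathcal S(x'))\in F]=(1-\gamma)a+\gamma b'$. Using the two DP relations $b\le e^\epsilon b'+\delta$ and $b\le e^\epsilon a+\delta$ simultaneously, I would show $(1-\gamma)a+\gamma b\le e^{\epsilon'}\bigl((1-\gamma)a+\gamma b'\bigr)+\gamma\delta$ with $e^{\epsilon'}=1+(e^\epsilon-1)\gamma$; the worst case is attained at $b'=a$ and $b=e^\epsilon a+\delta$, where the two sides meet with slack exactly $\gamma\delta$. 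Since $\epsilon'$ and $\delta'$ are increasing in $\gamma$ and the true mixing weight is at most $m\,\Gamma(x)$, substituting $\gamma=m\,\Gamma(x)$ yields the claimed $\epsilon'=\log(1+(e^\epsilon-1)\Gamma(x)m)$ and $\delta'=\delta\,\Gamma(x)m$.

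The step I expect to be the main obstacle is the coupling underlying the decomposition: making the good component $\nu_0$ genuinely identical under $x$ and $x'$, and controlling the case where $k$ is drawn more than once. A single draw of $k$ is a one-coordinate change of $s$, to which the $(\epsilon,\delta)$-DP of $\mathcal A$ applies directly; multiple draws would naively demand group privacy, but since they are already absorbed into the low-probability mass $\gamma$, I expect to handle them by conditioning on the positions that hit $k$ and arguing that it is precisely this shared-component structure, together with the two one-coordinate DP inequalities, that drives the amplification. Getting the numerical combination to land on the exact $\log(1+(e^\epsilon-1)\gamma)$ rather than a looser bound is the one place that requires genuine care.
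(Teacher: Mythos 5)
Your proposal is correct and follows essentially the same route as the paper's proof: your good/bad decomposition by whether the sensitive index is ever drawn is exactly the paper's conditioning on $i\in T$ versus $i\notin T$, your quantities $a=\nu_0(F)$, $b=\nu_1(F)$, $b'=\nu_1'(F)$ and weight $\gamma=m\,\Gamma(x)$ are the paper's $D$, $C$, $C'$ and $p$, your two DP inequalities $b\le e^{\epsilon}b'+\delta$ and $b\le e^{\epsilon}a+\delta$ are the paper's $C\le e^{\epsilon}\min\{C',D\}+\delta$, and your numerical combination (with worst case pinned down via $\min\{u,v\}\le\alpha u+(1-\alpha)v$) is the paper's concluding chain of inequalities. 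The one point you flag as the main obstacle---that multiple draws of the sensitive index make $s$ and $s'$ differ in several coordinates, so single-change DP (rather than group privacy) does not directly apply---is also present in the paper's proof and is silently glossed over there, so your write-up is, if anything, more candid about that residual gap.
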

\begin{proof}
We will use $T \subseteq \{1,\dots,n\}$ to denote the identities of the $m$-subsampled elements $s_1,\dots,s_m$ (i.e. their index, not their actual value). Note that $T$ is a random variable, and that the randomness of $\mathcal{A'}:=\mathcal{A}\circ \mathcal{S}$ includes both the randomness of the sample $T$ and the random coins of $\mathcal{A}$. Let $x\sim x'$ be adjacent datasets and assume that $x$ and $x'$ differ only on some row $t$. Let $s$ (or $s'$) be a subsample from $x$ (or $x'$) containing the rows in $T$ . Let $F$ be an arbitrary subset of the range of $\mathcal{A}$). For convenience, define $p= \Gamma(x) m$.
Note that, by definition of quantum amplitude encoding and by union bound,
\begin{align*}
  \Pr[i \in T] \leq m \Pr[\ket{x}\text{ collapses to state }\ket{i}] = |x_i|^2 \leq m \Gamma(x):=p
\end{align*}
To show $(\log(1+ p(e^\epsilon-1)), p\delta)$-DP, we have to bound the ratio
\begin{align*}
    \frac{\Pr[\mathcal{A}'(x)\in F]-p\delta}{\Pr[\mathcal{A}'(x')\in F]}\leq
     \frac{p\Pr[\mathcal{A}(s)\in F|i \in T]+ (1-p)\Pr[\mathcal{A}(s)\in F|i \not\in T]-p\delta}{p\Pr[\mathcal{A}(s')\in F|i \in T]+ (1-p)\Pr[\mathcal{A}(s')\in F|i \not\in T]}
\end{align*}
by $p(1+ (e^\epsilon-1))$. For simplicity, define the quantities
\begin{align*}
    C = \Pr[\mathcal{A}(s)\in F|i \in T]
    \\ C' = \Pr[\mathcal{A}(s')\in F|i \in T]
    \\ D = \Pr[\mathcal{A}(s)\in F|i \not\in T] = \Pr[\mathcal{A}(s')\in F|i \not\in T].
\end{align*}
We can rewrite the ratio as 
\begin{align*}
    \frac{\Pr[\mathcal{A}'(x)\in F]-p\delta}{\Pr[\mathcal{A}'(x')\in F]} = \frac{pC + (1-p)D - p\delta}{p C' + (1-p)D}.
\end{align*}
Now we use the fact that, by $(\epsilon, \delta)$-DP, $C \leq \min\{C',D\} + \delta$. Plugging all together, we get 
\begin{align*}
    pC + (1 - p)D - p\delta 
    \leq p(e^\epsilon \min\{C',D\}) + (1 - p)D\\
    \leq p(\min\{C',D\} + (e^\epsilon-1) \min\{C',D\}) 
    + (1 - p)D \\
    \leq p(C' + (e^\epsilon-1) (pC' + (1 -p)D)) 
    + (1 - p)D \\
    \leq (pC' + (1 - p)D) + p(e^\epsilon-1))(pC' + (1 - p)D)
    \leq (1+ p(e^\epsilon-1))(pC' + (1 - p)D),
\end{align*}
where the third-to-last line follow from $\min\{x,y\} \leq \alpha x + (1 - \alpha)y$ for every $0 \leq \alpha \leq 1$. 
To conclude the proof, we rewrite the ratio and get the desired bound.
\[
\frac{\Pr[\mathcal{A}'(x)\in F]-p\delta}{\Pr[\mathcal{A}'(x')\in F]}\leq 1 + p(e^\epsilon -1).
\]
\end{proof}

If we don't require $\mathcal{A}$ to be $(\epsilon,\delta)$-DP, we obtain the following corollary as a special case of~\thmref{lemma: quantum-inspired}.

\begin{cor}[Approximate DP by quantum-inspired sampling]
For any $x\in \mathbb{C}^n$, let $s = (s_1, \dots, s_m)$ be the measurement outcomes in the computational basis of $\ket{x}^{\otimes m}$.
Denote $\mathcal{S}$ as the sampling mechanism that maps $x$ into $s$. Let $\mathcal{A}$ be an  algorithm 
that takes only $s$ as input. Then $\mathcal{A}\circ \mathcal{S}$ is
$(0, \Gamma(x) m)$-DP.
\end{cor}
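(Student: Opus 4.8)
The plan is to obtain this corollary as an immediate consequence of \thmref{lemma: quantum-inspired}, exploiting the fact that \emph{every} randomized algorithm is trivially differentially private for a degenerate choice of parameters. Concretely, I would first observe that an arbitrary algorithm $\mathcal{A}$ satisfies $(0,1)$-DP: for any pair of neighbouring inputs $s\sim s'$ and any subset $F$ of the range of $\mathcal{A}$,
\[
\Pr[\mathcal{A}(s)\in F]\leq 1 = e^0\,\Pr[\mathcal{A}(s')\in F]+1,
\]
since probabilities never exceed $1$. Hence $\mathcal{A}$ meets the hypothesis of \thmref{lemma: quantum-inspired} with $\epsilon=0$ and $\delta=1$, even though no genuine privacy assumption is imposed on it.

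Next I would instantiate the amplified parameters supplied by \thmref{lemma: quantum-inspired} at these values. The amplified privacy level collapses to
\[
\epsilon' = \log\bigl(1+(e^0-1)\,\Gamma(x)\,m\bigr)=\log(1)=0,
\]
because the factor $e^0-1=0$ annihilates the multiplicative term, while the amplified failure probability becomes
\[
\delta'=\delta\,\Gamma(x)\,m = \Gamma(x)\,m.
\]
Therefore $\mathcal{A}\circ\mathcal{S}$ is $(0,\Gamma(x)m)$-DP, which is precisely the claimed bound.

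There is essentially no hard step here: all of the genuine work is carried out by the amplification argument in \thmref{lemma: quantum-inspired}, and the corollary is recovered purely through the substitution $(\epsilon,\delta)=(0,1)$. The only mild point worth flagging is that when $\Gamma(x)m\geq 1$ the resulting guarantee $(0,\Gamma(x)m)$-DP is vacuous; this is harmless, since in that regime any algorithm is already $(0,1)$-DP and hence a fortiori $(0,\Gamma(x)m)$-DP, so the conclusion holds across the entire parameter range.
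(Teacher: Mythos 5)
Your proof is correct and follows exactly the route the paper intends: the corollary is stated there as a special case of \thmref{lemma: quantum-inspired}, and your substitution $(\epsilon,\delta)=(0,1)$ (justified by the observation that every algorithm is trivially $(0,1)$-DP) is precisely that specialization, yielding $\epsilon'=\log(1+(e^0-1)\Gamma(x)m)=0$ and $\delta'=\Gamma(x)m$. The remark about vacuousness when $\Gamma(x)m\geq 1$ is a sensible extra check but not needed for the claim.
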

\section{Amplification by quantum evolution}
\label{sec:pp}

In this section, we look at quantum operations and how they can amplify differential privacy. First, we show a general result regarding the QDP amplification for distance-decreasing quantum operations and then we explore some explicit examples for certain classes of quantum channels. To establish our results, we first need to characterize quantum channels, in terms of the quantum analogs of the classical mixing conditions introduced in \cite{doeblin, dobrushin}.

\begin{definition}
Let $T:\mathcal{H}\xrightarrow{}\mathcal{H'}$ be a quantum operation and $\gamma \in [0,1]$. We say that $T$ is:
\begin{enumerate}
    \item $\gamma$-Dobrushin if 
    \[
     \sup_{\rho \neq \sigma}\frac{||T(\rho)-T(\sigma)||_{\tr}}{||\rho -\sigma||_{\tr}}\leq \gamma.
    \]
    \item $\gamma$-Doeblin if there exists a quantum operation $T':\mathcal{H}\xrightarrow{}\mathcal{H'}$ such that $T'(X)=Tr[X]Y$ for some $Y\in \mathcal{H'}$ and $T-\gamma T'$ is positive.
\end{enumerate}
\end{definition}

We remark that the $\eta^{Tr}(T):=\inf\{\gamma: T \text{ is }\gamma\text{-Dobrushin}\}$, where $\eta^{Tr}(T)$ is the quantum Dobrushin coefficient introduced in \cite{dobrushin2, dobrushin3}.

\begin{lemma}[adapted from \cite{wolf}, Theorem 8.17]
Let T be a $\gamma$-Doeblin quantum operation with $\gamma \in [0,1]$. Then T is a $(1-\gamma)$-Dobrushin quantum operation.
\end{lemma}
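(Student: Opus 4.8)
The plan is to exploit the defining decomposition of the Doeblin condition and reduce the claim to the well-known fact that positive trace-preserving maps contract the trace distance. Write $R := T - \gamma T'$, which is a positive map by the $\gamma$-Doeblin hypothesis, and decompose $T = \gamma T' + R$. The first observation I would make is that $T'$ annihilates every trace-zero operator: since $T'(X) = \Tr[X]\,Y$, for any two density matrices $\rho,\sigma$ we have $\Tr[\rho-\sigma]=0$ and hence $T'(\rho-\sigma)=0$. Consequently the ``replacement'' term drops out, and using linearity of quantum operations,
\[
T(\rho)-T(\sigma) = \gamma\,T'(\rho-\sigma) + R(\rho-\sigma) = R(\rho-\sigma).
\]

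Second, I would argue that the rescaled map $\tilde R := \frac{1}{1-\gamma}R$ is a positive trace-preserving map (working in the standard setting for the Dobrushin coefficient, where $T$ is a channel and $Y$ is a genuine state with $\Tr[Y]=1$, so that $T'$ is itself trace-preserving). Positivity of $\tilde R$ is inherited directly from the Doeblin condition. For trace preservation, I would compute $\Tr[R(X)] = \Tr[T(X)] - \gamma\Tr[T'(X)] = \Tr[X] - \gamma\Tr[X] = (1-\gamma)\Tr[X]$, so that $\Tr[\tilde R(X)] = \Tr[X]$ for every $X$.

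Third, I would invoke the contraction of the trace norm under positive trace-preserving maps: for every Hermitian operator $X$ one has $\|\tilde R(X)\|_\tr \leq \|X\|_\tr$. Applying this with $X = \rho-\sigma$ and combining with the first step yields
\[
\|T(\rho)-T(\sigma)\|_\tr = \|R(\rho-\sigma)\|_\tr = (1-\gamma)\,\|\tilde R(\rho-\sigma)\|_\tr \leq (1-\gamma)\,\|\rho-\sigma\|_\tr .
\]
Taking the supremum over $\rho\neq\sigma$ then gives the Dobrushin bound with coefficient $1-\gamma$, as claimed.

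The main obstacle is the contraction invoked in the third paragraph. The contractivity recorded in the preliminaries is stated for \emph{quantum operations}, i.e. completely positive maps, whereas $R = T - \gamma T'$ is a difference of completely positive maps and is in general only positive, not completely positive; hence I cannot quote it verbatim and instead need the stronger statement that every positive trace-preserving map contracts the trace norm on Hermitian operators. I would establish this by the standard Jordan-decomposition argument: write $X = X_+ - X_-$ with $X_\pm \succeq 0$ of orthogonal support, so $\|X\|_\tr = \tfrac{1}{2}(\Tr[X_+]+\Tr[X_-])$; let $P$ be the projector onto the positive eigenspace of $\tilde R(X)$ and bound $\Tr[P\,\tilde R(X)] \leq \Tr[P\,\tilde R(X_+)] \leq \Tr[\tilde R(X_+)] = \Tr[X_+]$, using respectively $\tilde R(X_-)\succeq 0$ with $P\succeq 0$, then $0\preceq P\preceq \mathbb{I}$ with $\tilde R(X_+)\succeq 0$, and finally trace preservation. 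The symmetric bound on the negative part gives $\Tr[\,|\tilde R(X)|\,]\leq \Tr[X_+]+\Tr[X_-]$, hence $\|\tilde R(X)\|_\tr \leq \|X\|_\tr$, completing the argument.
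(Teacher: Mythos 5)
The paper itself contains no proof of this lemma --- it is imported from Wolf's lecture notes (Theorem 8.17) and stated without argument --- so there is no internal proof to compare yours against; what follows is an assessment of your proposal on its own terms. Your derivation is correct and follows what is essentially the standard route behind the cited result: decompose $T = \gamma T' + R$ with $R := T - \gamma T'$ positive by the Doeblin hypothesis, observe that the replacer map $T'$ annihilates the traceless operator $\rho - \sigma$ so that $T(\rho) - T(\sigma) = R(\rho - \sigma)$, and conclude by trace-norm contraction. Your most valuable observation is the one you flag yourself: the contraction property recorded in the paper's preliminaries applies to quantum operations (completely positive maps), whereas $R$, being a difference of completely positive maps, is in general only positive, so it cannot be quoted verbatim. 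The Jordan-decomposition argument you supply to establish contraction for positive trace-preserving maps on Hermitian inputs is sound, and it is precisely the known fact that complete positivity is not needed for trace-norm contraction on Hermitian operators.

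Two caveats, neither fatal. First, the rescaling $\tilde R := R/(1-\gamma)$ is undefined at $\gamma = 1$, a value permitted by the hypothesis $\gamma \in [0,1]$. The clean fix is to dispense with $\tilde R$ altogether: your computation already shows $\Tr[R(X)] = (1-\gamma)\Tr[X]$ for $X \succeq 0$, so the same Jordan argument applied directly to $R$ yields $||R(\Delta)||_{\tr} \leq (1-\gamma)\,||\Delta||_{\tr}$ for all Hermitian $\Delta$, uniformly in $\gamma \in [0,1]$ (at $\gamma = 1$ this forces $R$ to vanish on traceless operators, as it must, since a positive operator with zero trace is zero). Second, you assume $T$ is trace-preserving and $\Tr[Y] = 1$; the paper's definitions guarantee neither, since its quantum operations are only trace non-increasing and $Y$ is an unconstrained element of $\mathcal{H}'$. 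This is, however, the setting in which the cited theorem is stated and arguably the intended reading of the paper's loose Doeblin definition, and you state the assumption explicitly rather than using it silently, so I would count this as a clarification the paper itself needs rather than a gap in your proof.
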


Thus, we show that a post-processed quantum channel $\mathcal{S}\circ \mathcal{E}$ amplifies quantum differential privacy, provided that $T$ is $\gamma$-Dobrushin.

\begin{thm}
\label{thm:qpp}
Let $\mathcal{E}$ be a $\gamma$-Dobrushin quantum operation and let $\mathcal{S}$ be a $(\tau,\epsilon(\tau),0)$-QDP quantum operation, where $\epsilon: \mathbb{R}^{+}\rightarrow [0,1]$. Then $\mathcal{S}\circ \mathcal{E}$ is $(\tau,\epsilon(\gamma\tau),0)$-QDP.
\end{thm}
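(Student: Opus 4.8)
The plan is to exploit the contraction property of $\mathcal{E}$ to shrink the trace distance between inputs, and then invoke the QDP guarantee of $\mathcal{S}$ at this reduced distance scale. The key observation is that the hypothesis ``$\mathcal{S}$ is $(\tau,\epsilon(\tau),0)$-QDP with $\epsilon:\mathbb{R}^{+}\to[0,1]$'' should be read as a \emph{family} of privacy guarantees, one for each admissible distance threshold: for every $\tau'\geq 0$ and every pair of states within trace distance $\tau'$, the operation $\mathcal{S}$ incurs privacy loss at most $\epsilon(\tau')$ with zero failure probability.

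First I would fix an arbitrary pair of inputs $\rho,\sigma$ with $||\rho-\sigma||_{\tr}\leq\tau$, together with an arbitrary POVM $M$ and subset $S$ of its outcomes; establishing the inequality for these fixed-but-arbitrary objects is exactly what the definition of $(\tau,\epsilon(\gamma\tau),0)$-QDP for $\mathcal{S}\circ\mathcal{E}$ requires. Applying the $\gamma$-Dobrushin property of $\mathcal{E}$ directly gives
\begin{align*}
||\mathcal{E}(\rho)-\mathcal{E}(\sigma)||_{\tr}\leq \gamma\,||\rho-\sigma||_{\tr}\leq \gamma\tau,
\end{align*}
so the two states $\mathcal{E}(\rho)$ and $\mathcal{E}(\sigma)$ that are fed into $\mathcal{S}$ sit within trace distance $\gamma\tau$. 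Since $\gamma\in[0,1]$, this threshold is no larger than $\tau$, so it lies in the regime where the QDP guarantee of $\mathcal{S}$ applies.

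I would then instantiate the QDP property of $\mathcal{S}$ at the threshold $\gamma\tau$, treating $\mathcal{E}(\rho)$ and $\mathcal{E}(\sigma)$ as the two admissible inputs. This yields
\begin{align*}
\Pr[M(\mathcal{S}(\mathcal{E}(\rho)))\in S]\leq e^{\epsilon(\gamma\tau)}\,\Pr[M(\mathcal{S}(\mathcal{E}(\sigma)))\in S],
\end{align*}
with no additive term because $\mathcal{S}$ is pure ($\delta=0$). As $M$, $S$, $\rho$ and $\sigma$ were arbitrary (subject only to $||\rho-\sigma||_{\tr}\leq\tau$), this is precisely the defining inequality of $(\tau,\epsilon(\gamma\tau),0)$-QDP for the composite channel $\mathcal{S}\circ\mathcal{E}$.

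The argument is essentially a two-line reduction, so I do not expect a genuine technical obstacle; the only point demanding care is conceptual, namely getting the interpretation of $\epsilon(\cdot)$ right. One must verify that the QDP definition controls privacy in terms of the \emph{inputs} to the operation, so that the contracted distance $\gamma\tau$ (the distance between the states $\mathcal{E}(\rho),\mathcal{E}(\sigma)$ entering $\mathcal{S}$) is the correct argument to plug into $\epsilon$, rather than the original distance $\tau$. A secondary remark worth recording is that if $\epsilon$ is monotonically non-decreasing, then $\epsilon(\gamma\tau)\leq\epsilon(\tau)$, which is exactly what makes this a genuine \emph{amplification} statement rather than a mere rephrasing.
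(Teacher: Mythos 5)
Your proposal is correct and follows essentially the same two-step argument as the paper: contract the trace distance via the $\gamma$-Dobrushin property of $\mathcal{E}$, then invoke the QDP guarantee of $\mathcal{S}$ at the reduced threshold $\gamma\tau$. Your explicit remark that the hypothesis must be read as a family of guarantees indexed by the distance threshold is the same (implicit) reading the paper uses when it asserts that $\mathcal{S}$ is $(\gamma\tau,\epsilon(\gamma\tau),0)$-QDP.
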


\begin{proof}
Let $\sigma$ and $\rho$ two states such that $||\rho-\sigma||_{\tr}\leq \tau$.
By definition of $\gamma$-Dobrushin, the trace distance between $\mathcal{E}(\rho)$ and $\mathcal{E}(\sigma)$ can be upper bounded as follows.
\[
||\mathcal{E}(\rho)-\mathcal{E}(\sigma)||_{\tr}\leq \gamma ||\rho-\sigma||_{\tr}\leq \gamma\tau.
\]
The channel $\mathcal{S}$ is $(\gamma\tau, \epsilon(\gamma\tau),0)$-QDP, thus, for any measurement $M$,
\begin{align*}
    \Pr[M(\mathcal{S}(\mathcal{E}(\rho))\in F] \leq \exp(\epsilon(\gamma\tau)) \Pr[M(\mathcal{S}(\mathcal{E}(\sigma))\in F].
\end{align*}
The inequality above shows that $\mathcal{S}\circ\mathcal{E}$ is $(\tau, \epsilon(\gamma\tau),0)$-QDP.
\end{proof}

To better demonstrate the application of this result, we give some explicit examples for different quantum channels.
First, we recall the definitions of some relevant quantum operations \cite{nielsen_chuang_2010}. The \emph{generalized amplitude damping} channel for a single qubit is defined as
\[
\mathcal{E}_{GAD}(\rho)=\sum_{k=0}^3 E_k\rho E_k^\dag
\]
in the $2$-dimensional Hilbert space $\mathcal{H}_2$, where
\begin{align*}
    E_0=\sqrt{p} \begin{bmatrix}
1 & 0 \\
0 & \sqrt{1-\gamma} 
\end{bmatrix},\; 
E_1=\sqrt{p} \begin{bmatrix}
0 & \sqrt{\gamma} \\
0 & 0 
\end{bmatrix},\\
E_2=\sqrt{1-p} \begin{bmatrix}
\sqrt{1-\gamma} & 0 \\
0 & 1 
\end{bmatrix},\; E_3=\sqrt{1-p} \begin{bmatrix}
0 & 0 \\
\sqrt{\gamma} & 0 
\end{bmatrix}.
\end{align*}
and $p$ and $\gamma$ are two parameters. 
The \emph{phase damping} channel for a single qubit is defined by the operator-sum representation
\[
\mathcal{E}_{PD}(\rho)= E_0\rho E_0^\dag + E_1\rho E_1^\dag
\]

in the $2$-dimensional Hilbert space $\mathcal{H}_2$, where
\begin{align*}
E_0=\begin{bmatrix}
\sqrt{1} & 0 \\
0 & \sqrt{\lambda} 
\end{bmatrix},\; E_1= \begin{bmatrix}
0 & 0 \\
0 & \sqrt{\lambda} 
\end{bmatrix}.
\end{align*}
Thus we can compose the channels above to define the \emph{phase-amplitude damping} channel
\[
\mathcal{E}_{PAD}(\rho)=\mathcal{E}_{GAD}(\mathcal{E}_{PD}(\rho)).
\]
As shown in \cite{quantumDP}, the phase-amplitude damping channel is $(d,\epsilon,0)$-QDP, with 
\begin{equation}
\label{eq:PAD}
\epsilon:=\ln\left(1+\frac{2d\sqrt{1-\gamma}\sqrt{1-\lambda}}{1-\sqrt{1-\gamma}\sqrt{1-\lambda}}\right)
\end{equation}
The \emph{depolarizing} channel corresponds to the quantum operation
\[
\mathcal{E}_{Dep}= p\frac{\mathbb{I}}{D} + (1-p)\rho
\]
where $D$ is the dimension of the state Hilbert space and $p$
is the probability parameter.
As shown in \cite{quantumDP}, the depolarizing channel is $(d,\epsilon,0)$-QDP, with 
\[
\epsilon:= \ln\left(1+\frac{1-p}{p}dD\right).
\]
The following result characterizes a family of Dobrushin quantum operation.
\begin{lemma}[\cite{dobrushin3}, Theorem 6.1]
\label{lem:unital}
Let $\Phi_T$ be a quantum operation such that
\begin{enumerate}
    \item $\Phi_T(\mathbb{I})=\mathbb{I}$ ($\Phi_T$ is unital),
    \item $\Phi_T : \mathbb{I} + w\cdot \sigma \rightarrow{} \mathbb{I} + (T w)\cdot \sigma$ where $T$ is a real matrix
with $||T||_\infty \leq 1$, where $||\cdot||_{\infty}$ is the operator norm.
\end{enumerate} 
Then $\Phi_T$ is $||T||_\infty$-Dobrushin.
\end{lemma}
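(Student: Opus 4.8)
The plan is to work entirely in the single-qubit Bloch representation, where a state is written as $\rho=\frac12(\mathbb{I}+\vec{r}\cdot\vec{\sigma})$ with Bloch vector $\vec{r}\in\mathbb{R}^3$ satisfying $|\vec{r}|\le 1$ (Euclidean norm) and $\vec{\sigma}=(\sigma_x,\sigma_y,\sigma_z)$ the Pauli vector. The notation $w\cdot\sigma$ with a real matrix $T$ signals exactly this qubit setting, which is also the regime of the channels (GAD, phase damping) to which the lemma is applied. The first step is to re-express the trace distance in these coordinates. I would use the elementary fact that for any real vector $\vec{a}$ the Hermitian matrix $\vec{a}\cdot\vec{\sigma}$ has eigenvalues $\pm|\vec{a}|$. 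Applying this to $\rho-\sigma=\frac12(\vec{r}-\vec{s})\cdot\vec{\sigma}$, whose eigenvalues are $\pm\frac12|\vec{r}-\vec{s}|$, the paper's normalization $\|A\|_\tr=\frac12\sum_i|\lambda_i|$ yields $\|\rho-\sigma\|_\tr=\frac12|\vec{r}-\vec{s}|$; that is, the trace distance is exactly half the Euclidean distance between the two Bloch vectors.

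Next I would pin down the action of the channel on Bloch vectors. Since quantum operations act linearly, hypothesis (2) combined with unitality (hypothesis (1)) gives $\Phi_T\big(\tfrac12(\mathbb{I}+\vec{r}\cdot\vec{\sigma})\big)=\tfrac12(\mathbb{I}+(T\vec{r})\cdot\vec{\sigma})$; in other words $\Phi_T$ implements the \emph{linear} map $\vec{r}\mapsto T\vec{r}$ on Bloch vectors, with no affine shift precisely because the maximally mixed state ($\vec{r}=0$) is fixed. Combining this with the previous step, for any two states with $\vec{r}\neq\vec{s}$ one obtains
\[
\frac{\|\Phi_T(\rho)-\Phi_T(\sigma)\|_\tr}{\|\rho-\sigma\|_\tr}
=\frac{\tfrac12|T(\vec{r}-\vec{s})|}{\tfrac12|\vec{r}-\vec{s}|}
=\frac{|T\vec{v}|}{|\vec{v}|},\qquad \vec{v}:=\vec{r}-\vec{s}\neq 0 .
\]
Taking the supremum over admissible pairs of states therefore reduces the Dobrushin coefficient to the Euclidean operator norm (largest singular value) of $T$, which is the quantity denoted $\|T\|_\infty$ in the statement.

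The one point that requires care is attainability: a priori the difference vectors $\vec{v}=\vec{r}-\vec{s}$ range only over differences of \emph{valid} Bloch vectors, not over all of $\mathbb{R}^3$, so I must verify that restricting to the Bloch ball does not shrink the supremum. I would settle this by observing that every direction $\hat{v}\in\mathbb{R}^3$ is realizable: taking the two antipodal pure states with Bloch vectors $\pm\hat{v}$ gives $\vec{v}=2\hat{v}$, which points in an arbitrary prescribed direction. Hence $\sup_{\vec{v}}|T\vec{v}|/|\vec{v}|$ over state-differences equals the unconstrained supremum $\|T\|_\infty$, and the hypothesis $\|T\|_\infty\le 1$ guarantees the resulting coefficient lies in $[0,1]$ as the definition of $\gamma$-Dobrushin requires. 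This shows $\Phi_T$ is $\|T\|_\infty$-Dobrushin. The main (mild) obstacle is bookkeeping with conventions: the paper's trace norm carries an extra factor of $\frac12$, and one must confirm that the ``operator norm'' in the statement is the spectral norm induced by the Euclidean geometry of the Bloch ball rather than the max-row-sum norm of \eqref{eq:infnormmatrix} (the two happen to coincide for the diagonal Bloch maps arising in the applications).
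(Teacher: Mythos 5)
The paper never proves this lemma: it is imported as a black box from the cited reference (\cite{dobrushin3}, Theorem~6.1), so there is no internal proof to compare yours against, and your argument must stand on its own. It does. With the paper's normalization $||A||_\tr=\tfrac12\sum_i|\lambda_i|$, your identity $||\rho-\sigma||_\tr=\tfrac12|\vec{r}-\vec{s}|$ is correct, since $\rho-\sigma=\tfrac12(\vec{r}-\vec{s})\cdot\vec{\sigma}$ has eigenvalues $\pm\tfrac12|\vec{r}-\vec{s}|$; hypothesis (2) plus linearity does force $\Phi_T$ to act as $\vec{r}\mapsto T\vec{r}$ on Bloch vectors (in fact hypothesis (1) is just the $w=0$ instance of hypothesis (2), so unitality is not an independent assumption); and the contraction ratio then collapses to $|T\vec{v}|/|\vec{v}|$. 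Your attention to attainability --- antipodal pure states realize every direction $\hat{v}$, so the supremum over state pairs equals the full spectral norm --- is exactly the point a sloppier argument would gloss over, and it shows the coefficient $||T||_\infty$ is tight rather than merely an upper bound. Your disambiguation of $||\cdot||_\infty$ as the Hilbert-space operator norm of \eqref{eq:infnorm-hilbert}, not the max-row-sum norm of \eqref{eq:infnormmatrix}, is also the right reading of the statement.

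One caveat you gesture at in passing but should state outright: your proof is intrinsically a single-qubit proof. In dimension $D>2$, with generalized Gell-Mann matrices replacing the Pauli vector, the trace norm of a traceless Hermitian matrix is no longer proportional to the Euclidean length of its generalized Bloch vector, so both your distance formula and the reduction of the supremum to a spectral norm fail; the lemma in that generality would need the machinery of the cited reference. The restriction is harmless for this paper --- the only substantive application is to single-qubit channels in the composition $\mathcal{E}_{PAD}\circ\mathcal{E}_{Dep}$, and for the depolarizing channel in arbitrary dimension the Dobrushin coefficient $1-p$ follows directly from $\mathcal{E}_{Dep}(\rho)-\mathcal{E}_{Dep}(\sigma)=(1-p)(\rho-\sigma)$ without any Bloch-sphere argument --- but since the lemma as stated does not fix the dimension, the scope of your proof should be made explicit.
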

Interestingly, the Dobrushin condition and unitality ensure quantum \emph{pure} differential privacy, as we show in the following theorem.
\begin{thm}
Let $\Phi$ be a quantum operation in the $2$-dimensional Hilbert space $\mathcal{H}_2$, such that
\begin{enumerate}
    \item $\Phi(\mathbb{I})=\mathbb{I}$ ($\Phi$ is unital),
    \item $\Phi$ is $\gamma$-Dobrushin.
\end{enumerate} 
Then $\Phi$ is $(d,\log(1+2d\gamma),0)$-QDP.
\end{thm}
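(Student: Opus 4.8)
The plan is to reduce the quantum differential privacy condition to a single measurement operator and to work entirely in the qubit Bloch representation. By the definition of $(\tau,\epsilon,\delta)$-QDP it suffices to fix an arbitrary two-outcome POVM, i.e. an operator $E$ with $0\preceq E\preceq\mathbb I$ whose acceptance probability on a state $\omega$ is $\Tr(E\,\omega)$, and to show that for every pair $\rho,\sigma$ with $\|\rho-\sigma\|_{\tr}\leq d$ one has $\Tr(E\,\Phi(\rho))\leq(1+2d\gamma)\,\Tr(E\,\Phi(\sigma))$; since $e^{\log(1+2d\gamma)}=1+2d\gamma$ and $\delta=0$, this is exactly the claim. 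As the first ingredient I would record, using \eqref{eq:trace} together with the $\gamma$-Dobrushin hypothesis and the non-increasing behaviour of the trace distance, the additive gap
\[
\Tr(E\,\Phi(\rho))-\Tr(E\,\Phi(\sigma))\leq\|\Phi(\rho)-\Phi(\sigma)\|_{\tr}\leq\gamma\,\|\rho-\sigma\|_{\tr}\leq\gamma d .
\]

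The second ingredient is to exploit unitality to bound the denominator $\Tr(E\,\Phi(\sigma))$ from below. Parametrise each qubit state by its Bloch vector $\vec r\in\mathbb R^3$, normalised so that $\|\rho-\sigma\|_{\tr}=\tfrac12\|\vec r-\vec s\|_2$ and $\mathbb I/2$ corresponds to $\vec r=0$. The condition $\Phi(\mathbb I)=\mathbb I$ forces $\Phi$ to fix the maximally mixed state, so it acts on Bloch vectors as a purely linear map $\vec r\mapsto T\vec r$ with no affine part, precisely the structure of \lemref{lem:unital}. The $\gamma$-Dobrushin hypothesis then says exactly that the Euclidean operator norm of $T$ is at most $\gamma$, whence $\|T\vec r\|_2\leq\gamma$ for every input and every output state obeys $\Phi(\sigma)\succeq\tfrac{1-\gamma}{2}\mathbb I$. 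This yields the uniform lower bound $\Tr(E\,\Phi(\sigma))\geq\tfrac{1-\gamma}{2}\Tr(E)$, in which the factor $2$ is the qubit dimension $D=2$.

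Finally I would combine the two ingredients: dividing the additive gap by $\Tr(E\,\Phi(\sigma))$ and inserting the eigenvalue bound produces a multiplicative factor of the form $1+\tfrac{\gamma d}{\tfrac{1-\gamma}{2}\Tr(E)}$, and collecting terms and taking logarithms gives the privacy loss.

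The step I expect to be the genuine obstacle is exactly this additive-to-multiplicative conversion, and in particular the control of the constant. The crude estimate above delivers a factor $1+\tfrac{2d\gamma}{1-\gamma}$ rather than the clean $1+2d\gamma$: on the extremal rank-one measurements ($\Tr E=1$) the lower bound $\Tr(E\,\Phi(\sigma))\geq\tfrac{1-\gamma}{2}$ is attained, and the depolarizing channel — which is unital and $\gamma$-Dobrushin with $\gamma=1-p$ — saturates $1+\tfrac{2d\gamma}{1-\gamma}$, matching the \eqref{eq:PAD}-type bounds. Hence the $\tfrac{1}{1-\gamma}$ in the denominator cannot be discarded by the separate Dobrushin and eigenvalue estimates alone. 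Reaching the stated form therefore relies either on the small-$\gamma$ regime, where $\tfrac{1}{1-\gamma}\approx1$, or on a sharper simultaneous pairing of numerator and denominator; pinning down the mechanism that removes the $\tfrac{1}{1-\gamma}$ factor is the crux of the argument and is where I would focus the effort.
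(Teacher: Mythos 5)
Your strategy coincides with the paper's own proof---bound the numerator gap via the Dobrushin contraction, then lower-bound the denominator $\Tr\{\Phi(\rho_2)M_m\}$ via unitality---but your execution of the second step is correct and the paper's is not. The paper asserts
\[
\Tr\Bigl\{\tfrac{1}{2}\Phi(\mathbb{I}+r\cdot\sigma)M_m\Bigr\}\;\geq\;\tfrac{1}{2}\Tr\{\Phi(\mathbb{I})M_m\}=\tfrac{1}{2}\Tr\{M_m\},
\]
which amounts to silently discarding the term $\tfrac{1}{2}\Tr\{\Phi(r\cdot\sigma)M_m\}$. Since $\Phi(r\cdot\sigma)$ is Hermitian and traceless, it has a negative eigenvalue unless it vanishes, and a POVM element supported on its negative eigenspace makes the discarded term strictly negative; so this inequality is simply false for a generic unital channel. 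Your bound $\Tr(E\,\Phi(\sigma))\geq\tfrac{1-\gamma}{2}\Tr(E)$ is the correct consequence of unitality plus the Dobrushin condition, and it is saturated (depolarizing channel, $\sigma=\ket{0}\bra{0}$, $E=\ket{1}\bra{1}$ gives exactly $\tfrac{1-\gamma}{2}$). In other words, the ``mechanism that removes the $\tfrac{1}{1-\gamma}$ factor,'' which you flagged as the crux, does not exist: the paper reaches the stated constant only through an invalid step, located precisely where you predicted the obstruction.

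In fact, your depolarizing remark is not just evidence that the technique is lossy---it is a counterexample to the theorem as stated. For $p=1-\gamma\in(0,1)$ the qubit depolarizing channel is unital and exactly $\gamma$-Dobrushin. Take $\rho_1=\ket{0}\bra{0}$, $\rho_2=(1-d)\ket{0}\bra{0}+d\ket{1}\bra{1}$, so that $\|\rho_1-\rho_2\|_{\tr}=d$, and $M=\ket{1}\bra{1}$. Then
\[
\frac{\Tr\{M\,\mathcal{E}_{Dep}(\rho_2)\}}{\Tr\{M\,\mathcal{E}_{Dep}(\rho_1)\}}=\frac{p/2+(1-p)d}{p/2}=1+\frac{2d\gamma}{1-\gamma}\;>\;1+2d\gamma
\]
whenever $d,\gamma>0$, violating the claimed $(d,\log(1+2d\gamma),0)$-QDP guarantee. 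This is consistent with the tight depolarizing bound $\ln\bigl(1+\tfrac{1-p}{p}dD\bigr)$ that the paper itself quotes from \cite{quantumDP}; indeed, the $\mathcal{E}_{Dep}$ and $\Phi_\gamma$ rows of the paper's own table already contradict each other for $D=2$, $\gamma=1-p$. The correct conclusion under these hypotheses is therefore $(d,\log(1+\tfrac{2d\gamma}{1-\gamma}),0)$-QDP, which is exactly what your argument yields and which your example shows is optimal. One small repair to make your derivation uniform over all POVM elements (not only those with $\Tr(E)\geq 1$): in dimension $2$ the operator $\Phi(\rho)-\Phi(\sigma)$ is traceless with spectral radius at most $\gamma d$, so the numerator gap is bounded by $\gamma d\,\Tr(E)$ rather than just $\gamma d$, and the factor $\Tr(E)$ then cancels against your denominator bound.
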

\begin{proof}
Consider two arbitrary qubit states $\rho_1,\rho_2$ and set $||\rho_1-\rho_2||_\tr:=d$. The $\gamma$-Dobrushin condition can be restated as follows:
\[
||\Phi(\rho_1)-\Phi(\rho_2)||_\tr \leq \gamma ||\rho_1-\rho_2||_\tr = \gamma d.
\]
Given an arbitrary POVM $M=\{M_m\}$, we want to bound the following quantity
\[
\frac{\Tr\{\Phi(\rho_1) M_m\}}{\Tr\{\Phi(\rho_2) M_m\}} -1
\]
First, we upper bound the numerator:
\[
{\Tr\{\Phi(\rho_1) M_m\}}-{\Tr\{\Phi(\rho_2) M_m\}} = {\Tr\{(\Phi(\rho_1)-\Phi(\rho_2)) M_m\}} \leq d\gamma \Tr\{M_m\}.
\]
Since $\rho_2$ is a qubit state, can write $\rho_2= \frac{1}{2}(\mathbb{I} + r \cdot \sigma )$ for a Bloch vector $r$.
\[
{\Tr\{\Phi(\rho_2) M_m\}} = \Tr\left\{\frac{1}{2}\Phi(\mathbb{I} + r \cdot \sigma )M_m\right\} \geq \frac{1}{2}\Tr\{\Phi(\mathbb{I})M_m\}= \frac{1}{2}\Tr\{M_m\},
\]
where we used unitality in the last inequality.
Putting all together, we get
\[
\frac{\Tr\{\Phi(\rho_1) M_m\}}{\Tr\{\Phi(\rho_2) M_m\}} -1 \leq  \frac{d\gamma \Tr\{M_m\}}{\frac{1}{2}\Tr\{M_m\}} = 2d\gamma.
\]
Thus the channel $\Phi$ is $\log(1+2d\gamma)$-QDP.
\[
\frac{\Tr\{\Phi(\rho_1) M_m\}}{\Tr\{\Phi(\rho_2) M_m\}} \leq  e^\epsilon,
\]
where $\epsilon:= \ln(1 + 2d\gamma)$.
\end{proof}

\begin{table}[t]
\caption{
The channels below satisfies $(d,\epsilon,0)$-QDP with the $\epsilon$ values shown in the table. We denoted as $\Phi_{\gamma}$ an arbitrary $\gamma$-Dobrushin unital channel.}
\label{sample-table}
\vskip 0.15in
\begin{center}
\begin{small}
\begin{sc}
\begin{tabular}{lcccr}
\toprule
Channel & $\epsilon$ & Reference \\
\midrule
$\mathcal{E}_{Dep}$    & $\ln\left(1+\frac{1-p}{p}dD\right)$ &\cite{quantumDP}\\
$\mathcal{E}_{PAD}$& $\ln\left(1+\frac{2d\sqrt{1-\gamma}\sqrt{1-\lambda}}{1-\sqrt{1-\gamma}\sqrt{1-\lambda}}\right)$ &\cite{quantumDP}\\
$\Phi_{\gamma}$&$\ln(1 + 2d\gamma)$& Theorem 6\\
$\mathcal{E}_{PAD}\circ \mathcal{E}_{Dep}$ & $(1-p)\ln\left(1+\frac{2d\sqrt{1-\gamma}\sqrt{1-\lambda}}{1-\sqrt{1-\gamma}\sqrt{1-\lambda}}\right)$& Theorem 7\\
\bottomrule
\end{tabular}
\end{sc}
\end{small}
\end{center}
\vskip -0.1in
\end{table}
Finally, we show how \thmref{thm:qpp} can be used to derive privacy amplification bounds for the composition of several channels.
\begin{thm}
The composition of the depolarizing channel and the phase-amplitude damping channel $\mathcal{E}_{PAD}\circ \mathcal{E}_{Dep}$ is $(d,\epsilon,0)$-QDP, where
\[
\epsilon = (1-p)\ln\left(1+\frac{2d\sqrt{1-\gamma}\sqrt{1-\lambda}}{1-\sqrt{1-\gamma}\sqrt{1-\lambda}}\right).
\]

\end{thm}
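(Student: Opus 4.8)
The plan is to realize $\mathcal{E}_{PAD}\circ\mathcal{E}_{Dep}$ as an instance of the post-processing amplification of \thmref{thm:qpp}, with $\mathcal{E}_{Dep}$ playing the role of the contracting (Dobrushin) channel and $\mathcal{E}_{PAD}$ the role of the QDP channel $\mathcal{S}$. The first step is to compute the Dobrushin coefficient of the depolarizing channel. Since the maximally mixed term is input-independent, for any two states $\rho,\sigma$ we have $\mathcal{E}_{Dep}(\rho)-\mathcal{E}_{Dep}(\sigma)=(1-p)(\rho-\sigma)$, so that $||\mathcal{E}_{Dep}(\rho)-\mathcal{E}_{Dep}(\sigma)||_{\tr}=(1-p)||\rho-\sigma||_{\tr}$. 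Hence $\mathcal{E}_{Dep}$ is exactly $(1-p)$-Dobrushin. Recording that, by \eqref{eq:PAD}, the phase–amplitude damping channel is $(\tau,\epsilon_{PAD}(\tau),0)$-QDP for every $\tau$ with $\epsilon_{PAD}(\tau)=\ln\!\left(1+\frac{2\tau\sqrt{1-\gamma}\sqrt{1-\lambda}}{1-\sqrt{1-\gamma}\sqrt{1-\lambda}}\right)$, and feeding the coefficient $1-p$ and $\mathcal{S}=\mathcal{E}_{PAD}$ into \thmref{thm:qpp} at threshold $\tau=d$, immediately yields that $\mathcal{E}_{PAD}\circ\mathcal{E}_{Dep}$ is $(d,\epsilon_{PAD}((1-p)d),0)$-QDP.

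The remaining, and main, step is to pass from this bound to the multiplicative form $\epsilon=(1-p)\,\epsilon_{PAD}(d)$ claimed in the statement. Because $\epsilon_{PAD}$ is concave with $\epsilon_{PAD}(0)=0$, one has $\epsilon_{PAD}((1-p)d)\ge(1-p)\,\epsilon_{PAD}(d)$, so the stated exponent is in fact strictly smaller than the one \thmref{thm:qpp} alone delivers; obtaining it requires using more than just the Dobrushin coefficient of $\mathcal{E}_{Dep}$. I would therefore exploit the mixture structure that linearity of quantum operations confers on the composition. Writing $\mathcal{E}_{Dep}=p\,\mathcal{R}+(1-p)\,\mathrm{id}$, where $\mathcal{R}(\rho)=\mathbb{I}/D$ is the replacement channel, gives
\begin{align*}
\mathcal{E}_{PAD}(\mathcal{E}_{Dep}(\rho))=p\,\tau_0+(1-p)\,\mathcal{E}_{PAD}(\rho),\qquad \tau_0:=\mathcal{E}_{PAD}(\mathbb{I}/D).
\end{align*}
Thus the composed channel is a convex combination of the input-independent (hence $0$-QDP) channel $\rho\mapsto\tau_0$ with weight $p$ and the $\epsilon_{PAD}(d)$-QDP channel $\mathcal{E}_{PAD}$ with weight $1-p$.

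For any POVM element $M$ and inputs $\rho_1,\rho_2$ at trace distance $d$, the accept-ratio becomes $\frac{p\,a+(1-p)\,b_1}{p\,a+(1-p)\,b_2}$ with $a=\Tr[\tau_0 M]$ and $b_i=\Tr[\mathcal{E}_{PAD}(\rho_i)M]$, and the pure-QDP guarantee of $\mathcal{E}_{PAD}$ supplies $b_1\le e^{\epsilon_{PAD}(d)}b_2$. I expect bounding this ratio by $e^{(1-p)\epsilon_{PAD}(d)}$ to be the crux, and the place where the specific form of $\mathcal{E}_{PAD}$ must enter: a purely generic mixture argument only delivers the weaker constant $e^{\epsilon_{PAD}(d)}$, attained as $a\to0$, so the reduction to the weighted exponent must come from a quantitative domination of the damped state $\mathcal{E}_{PAD}(\rho_2)$ by the constant state $\tau_0$. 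Concretely, I would lower-bound $a/b_2$ using $\rho_2\preceq\mathbb{I}=D\cdot(\mathbb{I}/D)$ together with complete positivity, so that $\mathcal{E}_{PAD}(\rho_2)\preceq D\,\tau_0$, and then verify that the eigenvalue profile of $\tau_0$—computed directly from the Kraus operators of $\mathcal{E}_{GAD}$ and $\mathcal{E}_{PD}$ applied to $\mathbb{I}/D$—forces the ratio down to $e^{(1-p)\epsilon_{PAD}(d)}$ uniformly over admissible $M,\rho_1,\rho_2$. Checking that this domination is strong enough to collapse the accept-ratio to the claimed weighted exponent is the single delicate computation on which the theorem rests.
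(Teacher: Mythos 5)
Your first paragraph coincides, in its entirety, with the paper's own proof: the paper observes that $\mathcal{E}_{Dep}$ is $(1-p)$-Dobrushin (via \lemref{lem:unital} or direct computation, exactly as you do) and then ``combines \eqref{eq:PAD} with \thmref{thm:qpp}.'' As you correctly point out, what that combination actually yields is
\[
\epsilon \;=\; \ln\left(1+K(1-p)d\right), \qquad K:=\frac{2\sqrt{1-\gamma}\sqrt{1-\lambda}}{1-\sqrt{1-\gamma}\sqrt{1-\lambda}},
\]
with the contraction factor $(1-p)$ \emph{inside} the logarithm, whereas the theorem states $(1-p)\ln(1+Kd)$, with the factor \emph{outside}. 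Since $x\mapsto\ln(1+Kx)$ is concave and vanishes at $x=0$, one has $(1-p)\ln(1+Kd)\le\ln\left(1+K(1-p)d\right)$, generally strictly; so the stated exponent is strictly stronger than what the paper's proof (and your first step) establishes. In other words, the mismatch you diagnosed is not a gap in your reading but a genuine error in the paper: it has silently pulled $(1-p)$ out of the logarithm, and the same unsupported constant appears in the last row of its table of channels.

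Your attempted repair of that stronger claim, however, is incomplete, and the route you sketch cannot close it. With your notation $a=\Tr[\tau_0 M]$, $b_i=\Tr[\mathcal{E}_{PAD}(\rho_i)M]$, $\epsilon_0=\epsilon_{PAD}(d)$, any argument that uses only the two facts ``$b_1\le e^{\epsilon_0}b_2$'' and ``$a\ge c\,b_2$ for a uniform constant $c$'' (e.g.\ $c=1/D$ from $\mathcal{E}_{PAD}(\rho_2)\preceq D\,\tau_0$) gives at best
\[
\frac{p\,a+(1-p)\,b_1}{p\,a+(1-p)\,b_2}\;\le\;1+\frac{(1-p)\left(e^{\epsilon_0}-1\right)}{pc+(1-p)},
\]
and this is tight under those constraints alone. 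Its logarithm equals $\epsilon_0-\ln\left(1+\tfrac{pc}{1-p}\right)+o(1)$ as $\epsilon_0\to\infty$, which exceeds the target $(1-p)\epsilon_0$ as soon as $p\,\epsilon_0>\ln\left(1+\tfrac{pc}{1-p}\right)$ --- i.e.\ precisely in the weak-damping regime $\sqrt{1-\gamma}\sqrt{1-\lambda}\to 1$ where $\epsilon_0$ blows up. So a uniform domination of $\mathcal{E}_{PAD}(\rho_2)$ by $\tau_0$ is structurally insufficient; one would have to show that the configurations where $b_1/b_2$ is near $e^{\epsilon_0}$ force $a/b_2$ to be large, a channel-specific computation that neither you nor the paper carries out. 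The honest summary is: the bound that is actually proved (by the paper, and by the rigorous half of your proposal) is $(d,\ln(1+K(1-p)d),0)$-QDP; the multiplicative form $(1-p)\ln(1+Kd)$ in the statement remains unproved.
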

\begin{proof}
We use the fact that the depolarizing channel $\mathcal{E}_{Dep}$ satisfies the Dobrushin condition. This can be shown either by direct computation or by observing that  $\mathcal{E}_{Dep}$ satisfies the hypothesis of \lemref{lem:unital} with $||T||_\infty = 1-p$.
Then we can combine \eqref{eq:PAD} with \thmref{thm:qpp} to derive the desired bound for the composed channel $\mathcal{E}_{PAD}\circ \mathcal{E}_{Dep}$.
\end{proof}
\section{Discussion and future work}\label{sec:conclusion}

We have undertaken a systematic study of differential privacy amplification in quantum and quantum-inspired algorithms. Our work is the first to reason about quantum encodings through the lens of differential privacy, laying the foundation for further analysis. Prior to this work, the choice of the encoding was motivated mainly by expressiveness, efficiency and robustness to experimental noise \cite{coyle, schuld2021}. Due to the intimate relation between DP, algorithmic stability and robustness to adversarial examples, our results suggest new criteria for the choice of quantum encodings.
Previous work explored the relation between experimental noise, quantum differential privacy and robustness. The tighter bounds presented in our paper can be used to improve the result of \cite{liu2021}, taking into account the composition of several mechanisms.

In the future, it would interesting to provide similar amplification results for the notion of quantum differential privacy employed in \cite{aaronson2019gentle}. As previously mentioned, in the quantum setting, different definitions of neighbouring quantum states lead to different notions for quantum differential privacy. Thus, understanding the relationship between these notions is of both theoretical and practical interest. To this end, one could also adopt the variety of quantum distances available in the quantum information literature. One of the best candidates for this purpose is the \emph{quantum Wasserstein distance} introduced in \cite{wasserstein2021}, which generalises the notion of neighbouring quantum states of \cite{aaronson2019gentle}. Furthermore, the relation between the contraction coefficient of quantum channels and DP which we have explored in this paper can also be studied alternatively with this distance due to the contractivity results proved in \cite{wasserstein2021}.

Another interesting future direction would be to expand our composition results and the previous work of \cite{quantumDP} on the differential privacy of specific quantum channels, to more general classes of quantum operations, in term of their general characteristics such as contraction coefficients or channel capacity. On this note, a good candidate would be to study the effect of the class of LOCC (Local Operations and Classical Communication) operations on both classical and quantum differential privacy. This class is of particular interest due to its relation to entanglement, which is another non-classical and unique property of the quantum world to be studied in the context of differential privacy.

Finally, as a follow up of our theoretical results, we aim to investigate the experimental implementation, their feasibility, and their application using the available NISQ devices.
Experimental noise is among the major limitations of current architectures. Yet, demonstrating that such noise provides beneficial properties, such as privacy and robustness, could shape the pathway for new applications.

\paragraph{Acknowledgements.}
{We thank Vincent Cohen-Addad, Alex B. Grilo, Nai-Hui Chia and Brian Coyle for useful discussions.}

\bibliography{qdp}
\bibliographystyle{unsrtnat}

\end{document}